\documentclass[%
 reprint,
nofootinbib,
 amsmath,amssymb,
 aps,
]{revtex4-2}

\usepackage{amsmath,amssymb,amsthm,bm,mathrsfs}
\usepackage[dvipdfmx]{graphicx}
\usepackage{float}
\usepackage{subcaption}

\usepackage{multirow}
\usepackage{feynmp}

\usepackage{ascmac}
\usepackage{braket}
\usepackage{xcolor}
\usepackage{algorithm}
\usepackage{algpseudocode}
\usepackage{tikz}
\usepackage{graphicx}
\usetikzlibrary{decorations}
\usetikzlibrary{decorations.pathmorphing}
\usetikzlibrary{automata,positioning,arrows.meta,math, patterns}
\usepackage{array}
\usepackage{enumerate}
\usepackage{caption}
\usepackage{enumitem}
\usepackage{ulem}

\newcommand{\real}{\operatorname{Re}}
\newcommand{\imag}{\operatorname{Im}}

\theoremstyle{definition}
\newtheorem{theorem}{Theorem}[section]
\newtheorem{definition}[theorem]{Definition}
\newtheorem{lemma}[theorem]{Lemma}
\newtheorem{remark}[theorem]{Remark}
\newtheorem{proposition}[theorem]{Proposition}

\newtheorem{corollary}[theorem]{Corollary}

\makeatletter
\renewcommand{\ALG@name}{Algorithm}
\makeatother

\renewcommand{\emph}{\textit}

\usepackage{hyperref}
\begin{document}

\title{Ground state preparation in $(2+1)$-dimensional pure $\mathbb{Z}_2$ lattice gauge theory \\ via deterministic quantum imaginary time evolution}
\author{Minoru Sekiyama}%
\email[]{minorusekiyama(at)hep.phys.s.u-tokyo.ac.jp}
\affiliation{%
Department of Physics, Graduate School of Science, The University of Tokyo, 7-3-1 Hongo, Bunkyo-ku, Tokyo 113-0033, Japan
}%
\author{Lento Nagano}%
\email[]{lento.nagano(at)keio.jp}
\thanks{Current affiliation: Graduate School of Science and Technology, Keio University, Yokohama, Kanagawa
223-8522, Japan}
\affiliation{%
International Center for Elementary Particle Physics (ICEPP), The University of Tokyo, 7-3-1 Hongo, Bunkyo-ku, Tokyo 113-0033, Japan
}%

\date{\today}

\begin{abstract}
    In this paper, we apply the deterministic quantum imaginary time evolution (QITE) algorithm to obtain the ground state of a $(2+1)$-dimensional pure $\mathbb{Z}_2$ lattice gauge theory. 
    We first construct the set of Pauli operators commuting with Gauss's law constraints, generalizing a previous result. 
    This makes the deterministic QITE gauge-invariant and reduces both the measurement and gate costs significantly without adding extra algorithm errors in the QITE.
    Then, the classical numerical simulation of the deterministic QITE using tensor networks is performed, and the results are compared with the density matrix renormalization group (DMRG) to evaluate the accuracy of the algorithm.
Specifically, we investigate the coupling and system size dependence, and find that the deterministic QITE can achieve a relative error of less than $0.1\%$ up to a twelve-plaquette system and coupling values in a regime that we study.
Furthermore, the error dependence on the number of time steps is studied and discussed.
\end{abstract}

\maketitle
\section{Introduction}
Recently, there has been growing interest in Hamiltonian simulation of lattice gauge theories (LGTs), aiming to avoid the infamous sign problem in the conventional Monte-Carlo method.
However, naive classical implementations of Hamiltonian simulation often suffer from exponentially growing computational costs as system size increases.
To overcome this problem, quantum computing in the context of LGTs has been studied extensively in recent years since a seminal paper~\cite{Jordan:2012xnu} (see also e.g.~\cite{Bauer:2022hpo,DiMeglio:2023nsa,davoudi2025tasi,halimeh2025quantum} for recent reviews).

Among several applications, the preparation of the ground state and thermal state is important to probe properties of the theory, such as quantum phases. Imaginary time evolution (ITE) has been used for both purposes. However, implementation of (non-unitary) imaginary time evolution using a quantum computer is not straightforward, due to the unitary nature of quantum gates. Several attempts have been made in this context, including variational~\cite{McArdle:2019zrh,gomes2021adaptive,gacon2024variational} and deterministic~\cite{Motta:2019yya} methods that approximate the non-unitary ITE by unitary operators, an algorithm utilizing quantum signal processing~\cite {silva2023fragmented,chan2023simulating,zhang2025quantum}, an approach based on double-bracket flow~\cite{gluza2026double}, and probabilistic methods~\cite{Liu:2020zgv,lin2021real,kosugi2022imaginary}. 
Moreover, the ITE algorithm using continuous variables was proposed in~\cite{Yeter-Aydeniz:2021mol}.
The deterministic method proposed in Ref.~\cite{Motta:2019yya} (see also~\cite{yeter2020practical,gomes2020efficient,nishi2021implementation,huang2023efficient,melendez2025adaptive} for improvements and applications), which we call the deterministic quantum imaginary time evolution (QITE), approximates the imaginary time evolution by the real time evolution via a summation of Pauli operators, where the coefficients are determined by solving a linear equation.
The deterministic QITE algorithm has bounded errors if the support of the unitary operator is taken large enough. 
The drawback, however, is that it requires many measurements (tomography) to obtain the linear equation coefficients. Ref.~\cite{Sun:2020dzq} proposed a method to reduce the costs using the symmetry of the Hamiltonian, which was applied to the global symmetry in a spin model.
While the deterministic QITE and the reduction method can be applied to a general Hamiltonian, including that of LGTs, the computational cost depends on the details of the model considered. Therefore, resource estimation for a specific Hamiltonian would be important in practice.

In the context of LGTs, Refs.~\cite{Pedersen:2023asd,Davoudi:2022uzo}
obtained the thermal phase diagram of $(1+1)$-dimensional models using QITE along with thermal pure quantum states.
Moreover, the gauge-invariant implementation of thermal state preparation via the quantum minimally entangled typical thermal states (QMETTS) algorithm, which utilizes the QITE as a subroutine, was recently proposed for the $(1+1)$-dimensional~$\mathbb{Z}_2$ LGT~\cite{maeno2026efficient}.
Beyond one dimension, a $(2+1)$-dimensional $\mathbb{Z}_2$ LGT has been intensively studied in recent years as a simple toy model.
In particular, Ref.~\cite{wang2023symmetry} identified a set of Pauli operators commuting with Gauss's law (along with other symmetries in several models), and performed a numerical simulation of the variational QITE using an ansatz composed of the resulting Pauli set.
This not only makes the QITE gauge-invariant, but also reduces computational resources significantly~\footnote{Another related research direction is the continuous variable version of QITE proposed in Ref.~\cite{Yeter-Aydeniz:2021mol}. The authors further applied the method to a $(1+1)$-dimensional scalar field theory. Moreover, Ref.~\cite{ale2025simulating} proposed the QITE for a qubit-qumode hybrid system and applied it to the $(2+1)$-dimensional quantum electrodynamics.}.

In this paper, we apply the deterministic QITE algorithm (original qubit version) to obtain the ground state energy in the $(2+1)$-dimensional pure $\mathbb{Z}_2$ LGTs, and evaluate its accuracy and resources. 
First, we construct the set of Pauli operators in the linear equation compatible with Gauss's law, by generalizing the results in~\cite{wang2023symmetry} to arbitrary supports.
This preserves gauge symmetry in the QITE while significantly reducing resource requirements.
Then, we perform classical (and noiseless) numerical simulations for the deterministic QITE using the resulting Pauli operators and compare the results with those from the Suzuki-Trotterized ITE (without unitary approximation), and the density matrix renormalization group (DMRG)~\cite{PhysRevLett.69.2863}.
Specifically, we use the ladder-like geometry (two links in the vertical direction) and vary the system size (in the horizontal direction) and the coupling strength.  
We find that the deterministic QITE results are consistent with the DMRG results to within $0.1\%$, at least for small systems (up to twelve plaquettes) and coupling values in the regime that we study. 
The algorithmic errors are further investigated by studying the dependence on a time step size.

This paper is organized as follows. 
First, we review the general framework for the QITE algorithm~\cite{Motta:2019yya} and the reduction using symmetry~\cite{Sun:2020dzq} in Section~\ref{sec:QITE}.
Section~\ref{subsec:hamiltonian} defines the Hamiltonian of a pure $\mathbb{Z}_2$ LGT. Then, we explain the QITE and the symmetry reduction specifically in the $\mathbb{Z}_2$ LGT in Section~\ref{subsec:qite-and-reduction-z2lgt}.
With these setups, Section~\ref{sec:numerical-simulation} shows the results for the numerical simulations. 
Finally, we summarize our results and give some future directions in Section~\ref{sec:conclusion}.

\section{Quantum imaginary time evolution}
\label{sec:QITE}
\subsection{Ground state preparation via imaginary time evolution}
The imaginary time evolution (ITE) is defined as
\begin{equation}\label{eq:imaginary-time-solution}
    \ket{\psi_{\text{ITE}}(\tau)}=\exp(-\tau H)\ket{\psi_\text{init}}\,,
\end{equation}
where $\ket{\psi_\text{init}}$ is a chosen initial state.
Let $\ket{\Omega}$ be the ground state of a target Hamiltonian~\footnote{We assume that the ground state is non-degenerate.}, and suppose that we choose $\ket{\psi_{\text{init}}}$ such that $|\braket{\psi_{\text{init}}|\Omega}|\neq0$.
Then, the ITE in the infinite imaginary time limit gives the target ground state,
\begin{align}
    \lim_{\tau\to\infty}\ket{\bar{\psi}_{\text{ITE}}(\tau)}&=\ket{\Omega}\,,
    \\
    \ket{\bar{\psi}_{\text{ITE}}(\tau)} &= \frac{\ket{\psi_{\text{ITE}}(\tau)}}{\mathcal{N}(\ket{\psi_{\text{ITE}}(\tau)})}\,,
\end{align}
where $\mathcal{N}(\ket{\psi})$ is a normalization factor defined by~$\mathcal{N}(\ket{\psi}) = \sqrt{\braket{\psi|\psi}}$.
In practice, we truncate the imaginary time at~$\ket{\psi_{\text{ITE}}(\tau_{\text{max}})}$, which gives an approximation to the ground state.
\subsection{Quantum algorithm for imaginary time evolution}
We explain the implementation of the ITE using the quantum algorithm proposed in Ref.~\cite{Motta:2019yya}, which we call the deterministic \textit{quantum} imaginary time evolution (QITE).
While there are other families of QITE algorithms (e.g., variational, probabilistic), we only focus on the deterministic one, and we will denote the deterministic one simply as the QITE.
The QITE consists of two steps. 
First, the entire evolution is decomposed into small pieces by the standard Suzuki-Trotter decomposition.
More precisely, we divide the Hamiltonian as
\begin{align}
    H = \sum_{m=1}^{M} h^{(m)}\,,
\end{align}
where we assume that each term $h^{(m)}$ is at most $\kappa$-local with a constant $\kappa$.
With this notation, the first-order Suzuki-Trotter decomposition is given by
\begin{align}\label{eq:ST-decomomp-1st}
    e^{-\tau H} = \left(\prod_{m=1}^M e^{-\Delta \tau h^{(m)}}\right)^{N_{\text{step}}} + \mathcal{O}(\Delta \tau)\,,
\end{align}
where $\Delta \tau = \tau/N_{\text{step}}$ with a positive integer $N_{\text{step}}$. We sequentially apply~$MN_{\text{step}}$ terms in~\eqref{eq:ST-decomomp-1st} on the initial state~$\ket{\psi_{\text{init}}}$.

The next step is to approximate each term~$e^ {-\Delta \tau h^{(m)}}$ using unitary operators.
We call this step the \emph{unitary approximation}.
Let us denote an approximated (normalized) state at the $k$-th time step and $(m-1)$-th Hamiltonian decomposition by $\ket{\tilde{\Psi}^{(k,m-1)}_{\mathcal{P}}}$ ($k\in\{1,2,\cdots,N_{\text{step}}\}, m\in\{1,2,\cdots M\}$)~\footnote{
For $m=1$, we define $\ket{\tilde{\Psi}^{(k,0)}_{\mathcal{P}}}=\ket{\tilde{\Psi}^{(k-1,M)}_{\mathcal{P}}}$ for $k>1$ and $\ket{\tilde{\Psi}^{(1,0)}_{\mathcal{P}}}=\ket{\psi_\text{init}}$ for $k=1$.
Besides, the notation~$\mathcal{P}$ will be defined later.}. 
The target state in the next step which we want to obtain is given by
\begin{align}
    \ket{\bar{\psi}^{(k,m)}_{\text{ITE}}} &:=\frac{e^{-\Delta \tau h^{(m)}}\ket{\tilde{\Psi}^{(k,m-1)}_{\mathcal{P}}}}{\mathcal{N}(e^{-\Delta \tau h^{(m)}}\ket{\tilde{\Psi}^{(k,m-1)}_{\mathcal{P}}})} \,.
\end{align}
The QITE algorithm approximates the normalized state~$\ket{\bar{\psi}^{(k,m)}_{\text{ITE}}}$ by a (unitary) real-time evolution via an operator $A_\mathcal{P}$ defined as
\begin{align}\label{eq:QITE-approx-real-time-evol}
    \ket{{\Psi}^{(k,m)}_{\mathcal{P}}(\bm{a}^{(k,m)})} := e^{-i \Delta \tau A_\mathcal{P} (\bm{a}^{(k,m)})}\ket{\tilde{\Psi}^{(k,m-1)}_{\mathcal{P}}}\,,
\end{align}
where we define the unitary operator $A_\mathcal{P}$ as
\begin{align}
    A_\mathcal{P}(\bm{a}^{(k,m)}) = 
\sum_{\bm{\sigma}\in\mathcal{P}}
     a^{(k,m)}_{\bm{\sigma}} \bm{\sigma}\,,
\end{align}
where~$\mathcal{P}$, which we call the \textit{Pauli pool}, is a set of Pauli strings $\{I,X,Y,Z\}^{\otimes N_{\text{q}}}$ with $N_{\text{q}}$ being the total number of qubits. We will omit the Pauli pool dependence when it is clear.
The coefficients~$\bm{a}^{(k,m)}$ can be obtained by minimizing the distance between the two states,
\begin{align}
    \frac{1}{\Delta\tau^2}||\ket{\bar{\psi}^{(k,m)}_{\text{ITE}}} - \ket{{\Psi}^{(k,m)}_{\mathcal{P}}}||^2\,.
\end{align}
Expanding the minimization condition up to first order of $\Delta \tau$ gives the following linear equation for $\{a_{\bm{\sigma}}^{(k,m)}\}$.
\begin{align}
    \sum_{\bm{\sigma}'\in\mathcal{P}} S^{(k,m)}_{\bm{\sigma},\bm{\sigma}'} a_{\bm{\sigma}'}^{(k,m)} = b_{\bm{\sigma}}^{(k,m)}\,,\quad
    \bm{\sigma}\in\mathcal{P}\,.
    \label{eq:linear-eq-QITE}
\end{align}
The matrix elements are defined as (up to $\mathcal{O}(\Delta \tau^2)$ corrections)
\begin{align}
    \label{eq:linear-eq-coeffs-QITE}
\begin{cases}
    {S}^{(k,m)}_{\bm{\sigma},\bm{\sigma}'}=
2\real{\langle\bm{\sigma}\bm{\sigma}'\rangle}\,,
    \\
    {b}^{(k,m)}_{\bm{\sigma}} =
    -2\imag\langle\bm{\sigma} h^{(m)}\rangle/\sqrt{1-2\Delta\tau\langle h^{(m)}\rangle}\,,
\end{cases}
\end{align}
where we denote $\langle\mathcal{O}\rangle=\bra{\tilde\Psi^{(k,m-1)}_{\mathcal{P}}}\mathcal{O}\ket{\tilde\Psi^{(k,m-1)}_{\mathcal{P}}}$ (see Appendix~\ref{sec:derivation-QITE} for the derivation).
One can solve the linear equation~\eqref{eq:linear-eq-QITE} via a classical solver to have a solution denoted by~$\tilde{\bm{a}}^{(k,m)}$. We define the corresponding operator as $\tilde{A}^{(k,m)}_{\mathcal{P}}:= {A}_{\mathcal{P}}(\tilde{\bm{a}}^{(k,m)})$ and denote the resulting state by
\begin{align}
    \ket{\tilde{\Psi}^{(k,m)}_\mathcal{P}}
    &:= \ket{{\Psi}^{(k,m)}_\mathcal{P}(\tilde{\bm{a}}^{(k,m)}_{\mathcal{P}})}
    \\
    &= e^{-i \Delta \tau \tilde{A}^{(k,m)}_\mathcal{P}}\ket{{\tilde\Psi}^{(k,m-1)}_{\mathcal{P}}}
    \,.
\end{align}
In order to obtain the resulting state, we implement the Suzuki-Trotter decomposition of $\tilde{A}^{(k,m)}_{\mathcal{P}}$, which requires~$\mathcal{O}(|\mathcal{P}|)$ gates in the worst case. 
We repeat this procedure $\ket{\tilde{\Psi}^{(k,m-1)}_\mathcal{P}}\to \ket{\tilde{\Psi}^{(k,m)}_\mathcal{P}}$ for each Hamiltonian term $h^{(m)}$ and each Suzuki-Trotter step.
The pseudocode for the QITE algorithm is given in Algorithm~\ref{alg:QITE-trotter}.

In the numerical simulation in Section~\ref{sec:numerical-simulation}, we use the second-order Suzuki-Trotter decomposition,
\begin{align}
\left(\prod_{m=1}^M e^{-\Delta \tau h^{(m)}/2}\prod_{m=M}^1 e^{-\Delta \tau h^{(m)}/2}\right)^{N_{\text{step}}}\,,
\end{align}
which can be implemented similarly by doubling the terms~$h^{(m)}$ and replacing the time step with~$\Delta\tau/2$.
This suppresses Suzuki-Trotter error in the simulation and makes the QITE-specific error (unitary approximation error) more prominent, which is the main interest of this paper.

\begin{algorithm}[H]
\caption{QITE algorithm}
\label{alg:QITE-trotter}
\begin{algorithmic}[1]

\State $\Delta\tau \gets \tau_{\max}/N_{\text{step}}$
\State state initialization $\ket{\tilde{\Psi}^{(k=0,m=M)}}\gets \ket{\psi_{\text{init}}}$
\For{$k = 1$ \textbf{to} $N_{\text{step}}$} 
    \Comment loop for time steps
     \State $\ket{\tilde{\Psi}^{(k,m=0)}}\gets \ket{\tilde{\Psi}^{(k-1,M)}}$
    \For{$m = 1$ \textbf{to} $M$} \Comment loop for Hamiltonian terms
        \State construct $\bm{S}^{(k,m)},\bm{b}^{(k,m)}$ via Eq.~\eqref{eq:linear-eq-coeffs-QITE} 
        \State find $\tilde{\bm{a}}^{(k,m)}$ by solving Eq.~\eqref{eq:linear-eq-QITE}

        \State $\tilde{A}^{(k,m)} \gets \sum_{\bm{\sigma}}\tilde{a}^{(k,m)}_{\bm{\sigma}} \bm{\sigma}$, 
        \State 
        $\ket{\tilde{\Psi}^{(k,m)}} \gets e^{-i\,\Delta\tau\,\tilde{A}^{(k,m)}}\ket{\tilde{\Psi}^{(k,m-1)}}$
    \EndFor
\EndFor
\end{algorithmic}
\end{algorithm}

\subsection{Reduction of Pauli pool}
\label{subsec:reduction-pauli-pool}
In this subsection, we review a general method to reduce the dimension of the Pauli pool according to Ref.~\cite{Sun:2020dzq}.
The proofs of the propositions given in this section are provided in Appendix~\ref{app:proof-pauli-pool-reduction}.
In the following, we introduce the support and weight of the Pauli string~$\bm{\sigma}$, which will be used to represent the effective size of the Pauli pool. Essentially, weight counts the number of (non-identity) Pauli operators, and support is a region on which the Pauli string acts nontrivially. The formal definition is given as follows.

\begin{definition}[Support and weight of Pauli string]
    Let $\bm{\sigma}$ be a Pauli string expressed as $\bm{\sigma}=\sigma_{q_1}^{\mu_1}\otimes\cdots \otimes\sigma^{\mu_k}_{q_k}$ where $\mathcal{Q}=\{q_i\}$ is a set of qubits and 
    $\mu_i \in \{I,X,Y,Z\}$ with~$\sigma^I=I$. The \textit{support} of the Pauli string is defined as a set of qubits on which $\bm{\sigma}$ acts nontrivially, i.e.,
    \begin{equation}
    \text{supp}(\bm{\sigma}) = \{q_i \in \mathcal{Q}\mid \mu_i \neq I\}\,.
    \end{equation}
    We also define the \textit{weight} of the Pauli string $\bm{\sigma}$ by the number of qubits in $\text{supp}(\bm{\sigma})$, that is, 
    \begin{equation}
        \text{wt}(\bm{\sigma}) := |\text{supp}(\bm{\sigma}) | \,.
    \end{equation}
\end{definition}

To obtain the coefficients in the linear equation~\eqref{eq:linear-eq-QITE} using the Pauli pool $\mathcal{P}$
we need to evaluate $\mathcal{O}(|\mathcal{P}|^2)$ terms defined in Eq.~\eqref{eq:linear-eq-coeffs-QITE},
which increases the total number of measurements. 
For example, if we choose the Pauli pool as $\mathcal{P}=\{I,X,Y,Z\}^{\otimes q}$, then~$\mathcal{O}(|\mathcal{P}|^2)\sim 2^{\mathcal{O}(q)}$.
Furthermore, the implementation of $e^{-i\tilde{A}_\mathcal{P}\Delta \tau}$ involves the Suzuki-Trotter decomposition of~$\mathcal{O}(|\mathcal{P}|)$ terms, which requires gate costs of the same order in the worst case.
Although it was shown in Ref.~\cite{Motta:2019yya} that the size of the support is bounded by~$\mathcal{O}(C^2)$, where $C$ is a correlation length, for a
nearest-neighbor local Hamiltonian on a two-dimensional square lattice, the exponential increase in measurement cost is a bottleneck of this algorithm.
To reduce this cost, one can use the following three propositions, obtained in the previous work~\cite{Sun:2020dzq}. 

The first proposition utilizes the reality of the coefficients in the linear equation~\eqref{eq:linear-eq-coeffs-QITE}.
\begin{definition}
We denote the number of $Y$ operators in a Pauli string~$\bm{\sigma}\in\mathcal{P}$ by $\text{num}_Y(\bm{\sigma})$. 
Additionally, we define the Pauli pool whose elements consist of Pauli strings with an odd number of $Y$ as
\begin{equation}
 \mathcal{P}_{\text{odd}}:=
        \{\bm{\sigma}\in \mathcal{P} \mid \text{num}_Y(\bm{\sigma}) \equiv1 \pmod 2 \}  \,. 
\end{equation}
\end{definition}

\begin{proposition}[Reduction via reality condition]
\label{prop:reduction-by-reality}
    Suppose that all elements of the matrix (or vector) representation of $\ket{\tilde{\Psi}^{(k,m-1)}_{\mathcal{P}}}$ and $h^{(m)}$ in the computational basis are real. 
Then the solution $\tilde{\bm{a}}$ of the linear equation~\eqref{eq:linear-eq-QITE} for the Pauli pool~$\mathcal{P}$ is derived from 
the reduced Pauli pool $\mathcal{P}_{\text{odd}}$.
Moreover, the real-time evolutions obtained from both Pauli pools are equivalent, i.e.,
\begin{equation}\label{eq:equivalence-P-and-Podd}
 \ket{\tilde{\Psi}^{(k,m)}_{\mathcal{P}}}
=
\ket{\tilde{\Psi}^{(k,m)}_{\mathcal{P}_{\text{odd}}}}   \,.
\end{equation}
\end{proposition}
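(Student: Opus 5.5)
The plan is to use the fact that, in the computational basis, a Pauli string $\bm{\sigma}$ is a real matrix when $\text{num}_Y(\bm{\sigma})$ is even and purely imaginary when it is odd, since $X,Z,I$ are real and $Y=iXZ$ contributes one factor of $i$. Write $\bm{\sigma}=i^{n_{\bm\sigma}}R_{\bm{\sigma}}$ with $R_{\bm\sigma}$ real and $n_{\bm\sigma}=\text{num}_Y(\bm\sigma)$. With $\ket{\tilde\Psi}:=\ket{\tilde{\Psi}^{(k,m-1)}_{\mathcal{P}}}$ and $h^{(m)}$ real, every expectation $\langle R\rangle$ of a real matrix $R$ is real. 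Accordingly, I will split $\mathcal{P}=\mathcal{P}_{\text{odd}}\sqcup\mathcal{P}_{\text{even}}$ and read off the block structure of the linear equation~\eqref{eq:linear-eq-QITE}.

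\textbf{Block structure.} First, $b_{\bm\sigma}\propto \imag\langle\bm\sigma h^{(m)}\rangle=\imag\big(i^{n_{\bm\sigma}}\langle R_{\bm\sigma}h^{(m)}\rangle\big)$ vanishes whenever $n_{\bm\sigma}$ is even, so $\bm b$ is supported on $\mathcal{P}_{\text{odd}}$. Second, $S_{\bm\sigma,\bm\sigma'}=2\real\big(i^{n_{\bm\sigma}+n_{\bm\sigma'}}\langle R_{\bm\sigma}R_{\bm\sigma'}\rangle\big)$ vanishes whenever $n_{\bm\sigma}+n_{\bm\sigma'}$ is odd. Hence $S$ is block diagonal, $S=S_{\text{odd}}\oplus S_{\text{even}}$, and the system decouples into
\begin{equation}
S_{\text{odd}}\bm a_{\text{odd}}=\bm b_{\text{odd}}\,,\qquad S_{\text{even}}\bm a_{\text{even}}=0\,.
\end{equation}
The first equation is exactly the linear equation for the pool $\mathcal{P}_{\text{odd}}$, because its entries are computed from the same expectation values on the same state.

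\textbf{Choice of solution.} If $S$ is singular, the solver's choice matters; I will take the minimum-norm or pseudo-inverse solution, which is what a classical solver returns. Since $S^{+}=S_{\text{odd}}^{+}\oplus S_{\text{even}}^{+}$, this gives $\tilde{\bm a}_{\text{even}}=0$ and $\tilde{\bm a}_{\text{odd}}=S_{\text{odd}}^{+}\bm b_{\text{odd}}$, which coincides with the $\mathcal{P}_{\text{odd}}$ solution. The same conclusion holds for any consistent solution rule that respects the block decomposition. Consequently $\tilde A_{\mathcal{P}}=\tilde A_{\mathcal{P}_{\text{odd}}}$, and applying $e^{-i\Delta\tau\tilde A}$ to the same input state yields~\eqref{eq:equivalence-P-and-Podd}.

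\textbf{Induction and the main obstacle.} The hypothesis concerns the input state at every step, so for a claim across all steps I will also check that reality propagates. For $\bm\sigma\in\mathcal{P}_{\text{odd}}$ we have $\bm\sigma=\pm iR_{\bm\sigma}$, so $-i\Delta\tau\tilde A_{\mathcal{P}_{\text{odd}}}$ is a real matrix, $e^{-i\Delta\tau\tilde A}$ is real orthogonal, and the output state is again real. Induction over $(k,m)$, starting from a real $\ket{\psi_{\text{init}}}$, then gives the equivalence at every step. I expect the main subtlety to be this non-uniqueness when $S$ is singular: for an arbitrary solution, the even components can contribute kernel directions. I would handle this either by fixing the pseudo-inverse convention, or by noting that $S_{\text{even}}\bm a_{\text{even}}=0$ implies $\|A_{\text{even}}\ket{\tilde\Psi}\|^2=\tfrac12\bm a_{\text{even}}^{T}S_{\text{even}}\bm a_{\text{even}}=0$. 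The latter means $A_{\text{even}}$ annihilates the state, so it does not affect the action to first order in $\Delta\tau$.
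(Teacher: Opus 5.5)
Your proposal is correct and follows the paper's route. The paper likewise uses the parity of $\text{num}_Y$ to show that $S$ has no entries coupling $\mathcal{P}_{\text{odd}}$ and $\mathcal{P}_{\text{even}}$ and that $b$ vanishes on $\mathcal{P}_{\text{even}}$; it then appeals to its block-structure lemma (zero-pad the reduced solution) and closes by induction over steps. Your explicit handling of the solution choice when $S$ is singular (pseudo-inverse), and your check that reality propagates through the real orthogonal $e^{-i\Delta\tau\tilde A_{\mathcal{P}_{\text{odd}}}}$, fill in details the paper leaves implicit.
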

This proposition implies two important cost reductions.
First, the evaluation of matrix elements in Eq.~\eqref{eq:linear-eq-coeffs-QITE} is needed only for the reduced Pauli pool $\mathcal{P}_{\text{odd}}$.
On top of that, $e^{-i\Delta \tau A_{\mathcal{P}}}$ can be replaced with $e^{-i\Delta \tau A_{\mathcal{P}_{\text{odd}}}}$ while keeping the accuracy unchanged (see Eq.~\eqref{eq:equivalence-P-and-Podd}), which requires fewer gates in general.

The second reduction method uses symmetry.
\begin{definition}[Symmetrized Pauli pool]\label{def:reduction-symmetry}
   Let~$\mathcal{S}$ be a symmetry group. For a given Pauli pool $\mathcal{P}$, we define the symmetrized Pauli pool as 
    \begin{equation}\label{eq:def-symm-pauli}
    \mathcal{P}_\mathcal{S} := \{\bm{\sigma}\in \mathcal{P} \mid \bm{\sigma} = s \bm{\sigma} s^\dagger, \forall s\in \mathcal{S}\} \,.   
    \end{equation}
\end{definition}
\begin{proposition}[Reduction via symmetry]
\label{prop:reduction-symmetry}
Let us denote the model Hamiltonian as $H$ and a symmetry group as~$\mathcal{S}$.
We then assume the following conditions.
\begin{itemize}
    \item (symmetry of initial state) The initial state satisfies $s\ket{\psi_{\text{init}}}=\ket{\psi_{\text{init}}}$ for all~$s\in \mathcal{S}$.
    \item (symmetry of Hamiltonian) The Hamiltonian satisfies $[H,s]=0$ for all $s \in \mathcal{S}$
    \item (symmetric decomposition) Each term in the Suzuki-Trotter decomposition satisfies $[h^{(m)},s]=0$ for all $m=1,\ldots,M$ and $s\in \mathcal{S}$.
\end{itemize}
With these conditions, 
the solution of the equation~\eqref{eq:linear-eq-QITE} for the Pauli pool~$\mathcal{P}$ is derived from 
the Pauli pool~${\mathcal{P}_\mathcal{S}}$.
Moreover, the real-time evolutions obtained from both Pauli pools are equivalent, i.e.,
\begin{equation}
\ket{\tilde{\Psi}^{(k,m)}_{\mathcal{P}}} = \ket{\tilde{\Psi}^{(k,m)}_{\mathcal{P}_\mathcal{S}}}\,.
\end{equation}
\end{proposition}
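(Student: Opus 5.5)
We argue by induction over the QITE steps $(k,m)$. The inductive hypothesis is that the current state $\ket{\tilde\Psi^{(k,m-1)}_{\mathcal{P}}}$ is invariant, $s\ket{\tilde\Psi}=\ket{\tilde\Psi}$ for all $s\in\mathcal{S}$. At $(1,0)$ this is the assumption on $\ket{\psi_{\text{init}}}$. The inductive step has two parts. First, we show that the linear system \eqref{eq:linear-eq-QITE} admits a solution supported on $\mathcal{P}_\mathcal{S}$, and that this solution produces the same state as the full-pool solution. Second, we show that the updated state is again invariant. We assume, as is implicit in Definition~\ref{def:reduction-symmetry}, that each $s$ is unitary and acts on Pauli strings by permutation up to phase, so that $s\bm\sigma s^\dagger=\pm\bm\sigma'$ for some $\bm\sigma'\in\mathcal{P}$. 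We also assume $\mathcal{P}$ is closed under this action. These hypotheses hold for the products of Pauli operators used for Gauss's law.

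\textbf{Block structure of the linear system.} Let $\mathcal{S}$ act on $\mathcal{P}$ by conjugation. Since $\langle\cdot\rangle$ is taken in an $\mathcal{S}$-invariant state and $[h^{(m)},s]=0$, we get
\begin{equation}
\langle s\bm\sigma s^\dagger\, s\bm\sigma' s^\dagger\rangle=\langle\bm\sigma\bm\sigma'\rangle,\qquad \langle s\bm\sigma s^\dagger h^{(m)}\rangle=\langle\bm\sigma h^{(m)}\rangle .
\end{equation}
Writing $s\bm\sigma s^\dagger=\chi_s(\bm\sigma)\,\pi_s(\bm\sigma)$ with a sign $\chi_s(\bm\sigma)=\pm1$, this becomes
\begin{equation}
S_{\pi_s\bm\sigma,\pi_s\bm\sigma'}=\chi_s(\bm\sigma)\chi_s(\bm\sigma')S_{\bm\sigma\bm\sigma'},\qquad b_{\pi_s\bm\sigma}=\chi_s(\bm\sigma)b_{\bm\sigma}.
\end{equation}
Now take $\bm\sigma\notin\mathcal{P}_\mathcal{S}$, so some $s$ has $s\bm\sigma s^\dagger\ne\bm\sigma$. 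The key subcase is when $s$ fixes $\bm\sigma$ up to the sign $-1$; this is the Gauss-law case, because conjugation by a Pauli product only flips signs. If $\bm\sigma'\in\mathcal{P}_\mathcal{S}$, the identities give $S_{\bm\sigma\bm\sigma'}=-S_{\bm\sigma\bm\sigma'}=0$ and $b_{\bm\sigma}=-b_{\bm\sigma}=0$. More directly, $\langle\bm\sigma\bm\sigma'\rangle=\langle s\bm\sigma\bm\sigma's^\dagger\rangle=-\langle\bm\sigma\bm\sigma'\rangle$, and the same argument applies to $b$.

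For the general permutation case, we decompose $\mathbb{R}^{\mathcal{P}}$ under the signed-permutation representation of $\mathcal{S}$. $S$ and $b$ commute with, respectively are fixed by, this representation. Then $S$ is block-diagonal between the trivial-isotypic part and its complement, and $b$ lies in the trivial part. The trivial part is spanned by the invariant combinations, which the paper's pool $\mathcal{P}_\mathcal{S}$ captures when strings are fixed up to sign. In both cases the system splits into a $\mathcal{P}_\mathcal{S}$-block and a complementary block whose right-hand side vanishes.

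\textbf{Equality of the states.} Any solution therefore has the form $\tilde{\bm a}=\tilde{\bm a}_\mathcal{S}\oplus\bm v$, where $\tilde{\bm a}_\mathcal{S}$ solves the reduced system and $\bm v$ lies in the kernel of the complementary block. Choosing the minimal-norm (pseudo-inverse) solution, which is what a classical solver returns, gives $\bm v=0$. Even for a general $\bm v$, we still need the evolution $e^{-i\Delta\tau A}$ to agree. At first order in $\Delta\tau$, which is the order at which the linear equation is derived, $\bm v\in\ker S$ means $\|A_{\bm v}\ket{\tilde\Psi}\|^2=\tfrac12\bm v^{T}S\bm v=0$ when $\bm v$ is real. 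Hence $A_{\bm v}$ annihilates the state and contributes nothing. With the pseudo-inverse choice we get exact equality $\tilde A_{\mathcal{P}}=\tilde A_{\mathcal{P}_\mathcal{S}}$.

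\textbf{Invariance of the new state.} Every element of $\mathcal{P}_\mathcal{S}$ commutes with every $s$, so $\tilde A_{\mathcal{P}_\mathcal{S}}$ does too. Then $s\,e^{-i\Delta\tau\tilde A}\ket{\tilde\Psi}=e^{-i\Delta\tau\tilde A}\ket{\tilde\Psi}$, which closes the induction; the same holds across the Trotter steps.

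The main obstacle is the first part, the decoupling. When $s$ permutes distinct Pauli strings rather than only flipping signs, invariant vectors are orbit sums, not single strings, so the pool of individually fixed strings is too small. We handle this by restricting to sign-type symmetries, which covers Gauss's law, or by noting that the statement requires this restriction in general. A secondary subtlety is the degenerate-kernel freedom in $\bm v$, which we settle with the minimal-norm solution convention.
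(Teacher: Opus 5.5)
Your argument is correct. In the case that matters, symmetries that act on Pauli strings only by signs as Gauss's law operators do, it is the paper's proof: a string $\bm\sigma\notin\mathcal{P}_\mathcal{S}$ anticommutes with some $s$, so invariance of the state forces $S_{\bm\sigma\bm\sigma'}=0$ for $\bm\sigma'\in\mathcal{P}_\mathcal{S}$ and $b_{\bm\sigma}=0$; the system then block-decomposes, the zero-extended reduced solution solves the full system (Lemma~\ref{lem:trivial}), and induction on the invariance of the state finishes it. Your restriction to sign-type symmetries and your minimal-norm convention for the kernel freedom make precise what the paper leaves implicit in asserting that non-symmetric strings anticommute with some $s$ and in choosing the zero-extended solution; you should drop your sentence claiming the split holds ``in both cases'', since your own closing remark shows it fails for genuine permutations.
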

As explained after Prop.~\ref{prop:reduction-by-reality}
, Prop.~\ref{prop:reduction-symmetry} also leads to reductions both in measurement and gate costs. Moreover, the resulting state automatically preserves the symmetry, i.e.,
\begin{equation}
    s e^{-i A_{\mathcal{P}_\mathcal{S}}\Delta \tau}\ket{\psi_{\text{init}}}=e^{-iA_{\mathcal{P}_\mathcal{S}}\Delta \tau}\ket{\psi_{\text{init}}}\,,
\end{equation}
holds
for all $s\in\mathcal{S}$~\footnote{If there is any error in evaluating the coefficients~\eqref{eq:linear-eq-coeffs-QITE}, those correspond to non-symmetric Paulis $\bm{\sigma}\notin \mathcal{P}_\mathcal{S}$ become zero, even if one solves the linear equation~\eqref{eq:linear-eq-QITE} for the entire Pauli pool~$\mathcal{P}$. However, once some errors occur in the evaluation, they may turn on these elements and break the symmetry.
In this sense, Prop.~\ref{prop:reduction-symmetry} makes the symmetry in the QITE more robust to the errors.}. 

One can further reduce the size of the Pauli pool by considering the quotient group under the action of $\mathcal{S}$.
This can be formulated as follows.
\begin{definition}[Quotient Pauli pool]
   Let $\mathcal{S}$ be a symmetry group, which is composed of Pauli strings.
   We introduce a binary relation~$\overset{\mathcal{S}}{\sim}$ defined by
   \begin{equation}
       \bm{\sigma}\overset{\mathcal{S}}{\sim}\bm{\sigma}'
       \Leftrightarrow
       \exists s \in \mathcal{S}, \,\text{such that} \,\bm{\sigma}=\pm\bm{\sigma}'s\,.
   \end{equation}
   Then, the \emph{quotient Pauli pool} is defined as
   \begin{equation}
       \mathcal{P}_{\mathcal{S}}/\mathcal{S} := \mathcal{P}_{\mathcal{S}}/\overset{\mathcal{S}}{\sim}\,.
   \end{equation}
Each element of the quotient group is represented by its representative as $[\bm{\sigma}]_\mathcal{S}:= \{\bm{\sigma}' \in \mathcal{P}_\mathcal{S} \mid \exists s \in \mathcal{S},\bm{\sigma}=\pm\bm{\sigma}'s\}$.
\end{definition}
Below, we will implicitly fix a representative for each element in $\mathcal{P}_\mathcal{S}/\mathcal{S}$.

\begin{proposition}[Reduction via quotient group]
Suppose that all the assumptions of Prop.~\ref{prop:reduction-symmetry} hold. 
Then, the solution of the linear equation~\eqref{eq:linear-eq-QITE} for the symmetrized Pauli pool~$\mathcal{P}_\mathcal{S}$ is derived from that of 
the quotient Pauli pool $\mathcal{P}_{\mathcal{S}}/\mathcal{S}$.
Moreover, the real-time evolutions obtained from both Pauli pools are equivalent, i.e.,
\begin{align}
    \ket{\tilde{\Psi}^{(m)}_{\mathcal{P}_\mathcal{S}}} = \ket{\tilde{\Psi}^{(m)}_{\mathcal{P}_\mathcal{S}/\mathcal{S}}}\,.
\end{align}
\end{proposition}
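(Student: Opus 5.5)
The plan is to show that, on the symmetric state, every Pauli string in an equivalence class $[\bm{\sigma}]_\mathcal{S}$ acts identically up to a sign. The linear system for $\mathcal{P}_\mathcal{S}$ then collapses onto one representative per class, and the resulting generators $A$ act the same way on the current state.

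The first step is to prove that the current state is $\mathcal{S}$-invariant at every step, $s\ket{\tilde\Psi^{(k,m-1)}_{\mathcal{P}_\mathcal{S}}}=\ket{\tilde\Psi^{(k,m-1)}_{\mathcal{P}_\mathcal{S}}}$. I would argue by induction using the remark after Prop.~\ref{prop:reduction-symmetry}. The initial state is invariant by assumption. Each update is $e^{-i\Delta\tau A_{\mathcal{P}_\mathcal{S}}}$ with $A_{\mathcal{P}_\mathcal{S}}$ commuting with every $s$, since $s\bm{\sigma}s^\dagger=\bm{\sigma}$ for each $\bm{\sigma}\in\mathcal{P}_\mathcal{S}$, so the update preserves invariance.

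The second step uses this invariance for $\bm{\sigma}'=\pm\bm{\sigma}s$, which gives
\[\bm{\sigma}'\ket{\tilde\Psi}=\pm\bm{\sigma}\ket{\tilde\Psi}.\]
Consequently, for $\bm{\sigma}_1\sim\bm{\sigma}_1'$ and $\bm{\sigma}_2\sim\bm{\sigma}_2'$ with signs $\epsilon_1,\epsilon_2$, we get $\langle\bm{\sigma}_1'\bm{\sigma}_2'\rangle=\epsilon_1\epsilon_2\langle\bm{\sigma}_1\bm{\sigma}_2\rangle$. To handle the left factor, I write $\bm{\sigma}_1'=\pm s\bm{\sigma}_1$ (using that $s$ commutes with $\bm{\sigma}_1$) and apply $\bra{\tilde\Psi}s=\bra{\tilde\Psi}$. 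In the same way, $\langle\bm{\sigma}' h^{(m)}\rangle=\epsilon\langle\bm{\sigma}h^{(m)}\rangle$, because $[h^{(m)},s]=0$. So $S$ and $b$ have a block structure in which each class contributes identical rows and columns up to sign.

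The third step makes the reduction explicit. I would define the reduced system indexed by representatives $[\bm{\sigma}]$:
\[\sum_{[\bm{\sigma}']} S_{\bm{\sigma}\bm{\sigma}'}\,c_{[\bm{\sigma}']}=b_{\bm{\sigma}}.\]
Given a solution $c$, I set $a_{\bm{\sigma}}=c_{[\bm{\sigma}]}\,\epsilon_{\bm{\sigma}}/n_{[\bm{\sigma}]}$, or simply put all weight on the representative and zero on the other class members. Then $\sum_{\bm{\sigma}'}S_{\bm{\sigma}\bm{\sigma}'}a_{\bm{\sigma}'}$ reproduces the full system row by row, since each row is a signed copy of its representative's row. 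Conversely, any solution $a$ of the full system projects to a solution $c_{[\bm{\sigma}]}=\sum_{\bm{\sigma}'\in[\bm{\sigma}]}\epsilon_{\bm{\sigma}'}a_{\bm{\sigma}'}$ of the reduced one.

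The main obstacle is showing that the states, not just the systems, agree. Here $A\ket{\tilde\Psi}$ equals $A_{\mathrm{red}}\ket{\tilde\Psi}$, but the exponentials need more than agreement on this single vector. I would handle this by noting that the invariant subspace $V=\{\ket{\phi}: s\ket{\phi}=\ket{\phi}\ \forall s\}$ is preserved by both $A$ and $A_{\mathrm{red}}$, since both commute with $\mathcal{S}$. On $V$, each $\bm{\sigma}'$ equals $\pm\bm{\sigma}$ as an operator, which requires the $\mathcal{S}$ elements to commute with the $\bm{\sigma}$'s, as holds in $\mathcal{P}_\mathcal{S}$. Hence $A|_V=A_{\mathrm{red}}|_V$, so the exponentials agree on $V$, which contains the current state. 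Finally, because minimum-norm and other solution choices for singular $S$ differ only by null directions, I would fix the convention of taking the same projected solution, so that the induction closes.
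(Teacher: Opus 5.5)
Your proposal is correct and follows essentially the same route as the paper. Equivalent strings $\bm\sigma_k=\eta_{kl}\bm\sigma_l s$ give signed copies of the corresponding rows and columns of $\bm S$ and entries of $\bm b$, so a solution can be collapsed onto representatives; the paper does this one element at a time via $a_k\to a_k+\eta_{kl}a_l$, $a_l\to 0$, which is your projection $c_{[\bm\sigma]}=\sum_{\bm\sigma'\in[\bm\sigma]}\epsilon_{\bm\sigma'}a_{\bm\sigma'}$ in pairwise form. Your explicit induction for the $\mathcal{S}$-invariance of the state, and your argument that $A$ and $A_{\mathrm{red}}$ coincide on the invariant subspace $V$, make rigorous two points the paper leaves implicit: its use of $\bra{\psi}s^\dagger=\bra{\psi}$, and its assertion that the collapsed solution ``produces the same result'' even when the full-pool solver returns a coefficient vector not supported on representatives.
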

This implies that it is sufficient to consider the representatives in each equivalent class.

Finally, we can combine those propositions into a single statement~\footnote{In general, even if $\mathcal{P}_1, \mathcal{P}_2$ satisfies $\ket{\tilde{\Psi}^{(m)}_{\mathcal{P}}} = \ket{\tilde{\Psi}^{(m)}_{\mathcal{P}_i}}$ ($i=1,2$), it does not necessarily hold $\ket{\tilde{\Psi}^{(m)}_{\mathcal{P}}} = \ket{\tilde{\Psi}^{(m)}_{\mathcal{P}_1 \cap \mathcal{P}_2}}$. However, at least in Prop.~\ref{prop:reduction-combined}, one can show that the intersection gives the correct reduction. See Appendix~\ref{app:proof-pauli-pool-reduction} for the details. }.

\begin{proposition}
\label{prop:reduction-combined}
     Under the assumptions of the previous propositions, the original Pauli pool $\mathcal{P}$ is reduced to the intersection of the odd Pauli pool $\mathcal{P}_\text{odd}$ and the symmetrized Pauli pool $\mathcal{P}_{\mathcal{S}}$
\begin{align}
    \ket{\tilde{\Psi}^{(m)}_{\mathcal{P}}} = \ket{\tilde{\Psi}^{(m)}_{\mathcal{P}_\text{odd}\cap\mathcal{P}_\mathcal{S}}}\,.
\end{align}
    In addition, the original Pauli pool $\mathcal{P}$ is further reduced to the intersection of the odd Pauli pool $\mathcal{P}_\text{odd}$ and the quotient Pauli pool of $\mathcal{S}$-action $\mathcal{P}_{\mathcal{S}}/\mathcal{S}$ as
\begin{align}
    \ket{\tilde{\Psi}^{(m)}_{\mathcal{P}}} = \ket{\tilde{\Psi}^{(m)}_{\mathcal{P}_\text{odd}\cap(\mathcal{P}_\mathcal{S}/\mathcal{S})}}\,.
\end{align}
\end{proposition}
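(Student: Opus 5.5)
The plan is to prove Prop.~\ref{prop:reduction-combined} directly from the structure of the linear equation~\eqref{eq:linear-eq-QITE}, rather than by composing the earlier propositions. As the footnote warns, composing them is not automatic. The key idea is that both reductions come from \emph{block-diagonality} of $S$ together with the vanishing of $b$ on the complementary block. These two partitions of $\mathcal{P}$ are compatible, so their intersection inherits the same structure.

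The first step is to recall what the proofs of Props.~\ref{prop:reduction-by-reality} and~\ref{prop:reduction-symmetry} actually establish. For reality, $\bm{\sigma}$ is real or imaginary according to the parity of $\text{num}_Y(\bm{\sigma})$. Since the state and $h^{(m)}$ are real, $\langle\bm{\sigma}\bm{\sigma}'\rangle$ is purely imaginary when the parities differ, so $S_{\bm{\sigma},\bm{\sigma}'}=0$ between $\mathcal{P}_{\text{odd}}$ and $\mathcal{P}\setminus\mathcal{P}_{\text{odd}}$. Likewise $\langle\bm{\sigma}h^{(m)}\rangle$ is real for even $\bm{\sigma}$, so $b_{\bm{\sigma}}=0$ there.

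For symmetry, write $\langle\bm{\sigma}\bm{\sigma}'\rangle=\langle s\bm{\sigma}s^\dagger\, s\bm{\sigma}'s^\dagger\rangle$, using $s\ket{\tilde\Psi}=\ket{\tilde\Psi}$. This invariance holds inductively, because each step is generated by a symmetric $A$. For Pauli $s$ we have $s\bm{\sigma}s^\dagger=\pm\bm{\sigma}$. Splitting $\mathcal{P}$ by the character $\chi_{\bm{\sigma}}:\mathcal{S}\to\{\pm1\}$, one finds $S_{\bm{\sigma},\bm{\sigma}'}=0$ unless $\chi_{\bm{\sigma}}=\chi_{\bm{\sigma}'}$. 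Also $b_{\bm{\sigma}}=0$ unless $\chi_{\bm{\sigma}}$ is trivial, since $h^{(m)}$ commutes with $s$. For a general (non-Pauli) symmetry group I would instead use the projector onto the invariant subspace, but since the quotient statement already assumes Pauli $\mathcal{S}$, I will adopt that setting.

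The main step is then to refine to the common partition of $\mathcal{P}$ labeled by the pair (parity of $Y$, character $\chi$). By the two facts above, $S$ is block diagonal with respect to this product partition. Moreover $b$ is supported only on the block labeled (odd, trivial), which is exactly $\mathcal{P}_{\text{odd}}\cap\mathcal{P}_\mathcal{S}$.

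Take the solution used in the paper, the minimum-norm or pseudoinverse one. Because of the block structure, it is computed block by block, and it vanishes on every block with zero right-hand side. So $\tilde{\bm{a}}$ is supported on $\mathcal{P}_{\text{odd}}\cap\mathcal{P}_\mathcal{S}$. Its restriction there coincides with the solution of the reduced system, because the reduced $S$ and $b$ are literally the corresponding sub-block. Hence $\tilde A_{\mathcal{P}}=\tilde A_{\mathcal{P}_{\text{odd}}\cap\mathcal{P}_\mathcal{S}}$ and the states agree.

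A point I must check is that the hypotheses propagate, namely that the reduced evolution keeps the state real and symmetric. Reality holds because $i\bm{\sigma}$ is a real matrix for odd $\bm{\sigma}$, so $e^{-i\Delta\tau A}$ is real orthogonal. Symmetry holds because $A$ commutes with $\mathcal{S}$. An induction over $(k,m)$ then completes the first claim.

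For the quotient claim, I would redo the quotient argument inside the block $\mathcal{P}_{\text{odd}}\cap\mathcal{P}_\mathcal{S}$. For $\bm{\sigma}'=\pm\bm{\sigma}s$ the expectation values satisfy $\langle\bm{\sigma}'\cdots\rangle=\pm\langle\bm{\sigma}\cdots\rangle$, and $\bm{\sigma}'$ and $\bm{\sigma}$ act identically on the symmetric state. Rows and columns of an equivalence class are therefore proportional, so the system collapses to one representative per class.

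The obstacle I expect is that multiplying by $s$ can change $\text{num}_Y$ parity when $s$ contains $Y$'s. In that case a class $[\bm{\sigma}]_\mathcal{S}$ could contain both odd and even strings, and "odd $\cap$ quotient" would be ill defined. I would handle this by noting that the realness hypothesis forces $s$ to be real up to sign, so $\text{num}_Y(s)$ is even and parity is class-invariant. Failing that, I would first choose odd representatives within each class and check that the even members already had zero coefficient.
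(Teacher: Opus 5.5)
Your proposal is correct and follows essentially the paper's route. The paper proves the first claim by applying Lemma~\ref{lem:trivial} to the union $(\mathcal{P}\setminus\mathcal{P}_{\text{odd}})\cup(\mathcal{P}\setminus\mathcal{P}_{\mathcal{S}})$, which is the same block decoupling you get from your product partition, and it proves the second by collapsing proportional rows and columns of $S$ within each $\mathcal{S}$-class, as you do. Three of your points tighten what the paper leaves implicit: the explicit minimum-norm solution, the check that reality and symmetry propagate through each step, and the observation that a real $\mathcal{S}$-invariant state forces every $s\in\mathcal{S}$ to have even $\text{num}_Y$, so that odd parity is constant on classes.
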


We will apply these propositions to the LGT model in the next section.

\section{QITE in $\mathbb{Z}_2$ lattice gauge theory}
\label{sec:qite-z2LGT}
\subsection{Hamiltonian}
\label{subsec:hamiltonian}
We consider a pure $\mathbb{Z}_{2}$ gauge theory on a two-dimensional square lattice~\cite{Wegner:1971app,kogut1979}.
The $\mathbb{Z}_{2}$ gauge theory is the simplest toy model for two-dimensional LGTs; no truncation is required for the gauge variables since it only takes discrete values.
However, it still possesses similar interesting properties in a full-fledged LGT; it exhibits confinement/deconfinement, while its extension with dynamical matter also exhibits string breaking. For this reason, this model with and without matter has been intensively studied in recent years using quantum computers and tensor networks (see e.g.,~\cite{borla2025string,xu2025string,Sugihara:2005cf,Tagliacozzo:2010vk,Liu:2013nsa,Wu:2025aly,Xu:2025ean,Cochran:2024rwe,Yamamoto:2020eqi,alexandrou2025realizing,azad2025barren,2026arXiv260407435X,emonts2023finding,irmejs2023quantum,lumia2022two,cobos2025real,mueller2025quantum,ding2021digital,wiese2013ultracold,di2025roughening,homeier2023realistic,gonzalez2020robust,sukeno2023measurement,borla2024deconfined,xu2025critical}).

A square lattice we consider is composed of~$N_{\text{link}}$ links where the gauge fields live and~$N_{\text{site}}$ sites. 
The $\mathbb{Z}_2$ link variables are directly mapped to qubits, that is, $N_{\text{q}}=N_{\text{link}}$.
We can put static fermionic matter at each site, but we will not do this in this paper. 
Each site is specified by a Cartesian coordinate~$\bm{n}=(n_x,n_y)$ for $n_\mu \in\{0,1,\cdots,N_{\text{site}}^\mu-1\}$, where~$N_{\text{site}}^\mu\,(\mu=x,y)$ is the number of sites in each direction.
An operator~$O$ acting on the link between~$\bm{n}$ and~$\bm{n}+\bm{e_{\mu}}\, (\mu =x, y)$ is denoted by~$O_{\bm{n}, \mu}$ where $\bm{e_{\mu}}$ is a unit vector in the $\mu$-th direction.
The Hamiltonian is given by~\footnote{The convention of the coupling in this article is different from the one used in the standard Kogut-Susskind Hamiltonian~\cite{PhysRevD.11.395}. The small coupling in our convention corresponds to the strong coupling (confining regime) in the standard one.}
\begin{align}
H&=
-\sum_{\bm{n}
}
\left(
X_{\bm{n},x}
+
X_{\bm{n},y}
\right)
-\lambda\sum_{\bm{n}}
W_{\bm{n}}
\,,
\\
W_{\bm{n}}& =
Z_{\bm{n},x}
Z_{\bm{n}+\bm{e_x},y}
Z_{\bm{n}+\bm{e_y},x}
Z_{\bm{n},y}\,.
\end{align}
The first term is called the electric term and the second is called the magnetic term.
The physical states must satisfy the following conditions, called Gauss's law constraints,
\begin{align}  
    g_{\bm{n}}\ket{\text{phys}} 
    &= (-1)^{q_{\bm{n}}} \ket{\text{phys}}\,,\
\\
 g_{\bm{n}}
 &=
 X_{\bm{n}-\bm{e_x},\bm{e_x}}
 X_{\bm{n},\bm{e_x}}
 X_{\bm{n}-\bm{e_y},\bm{e_y}}
X_{\bm{n},\bm{e_y}}\label{eq:gauss-law}
\end{align}
  where $q_{\bm{n}}$ specifies a static charge on the vertex~$\bm{n}$ (see Fig.~\ref{fig:hamiltonian-terms}). We set~$q_{\bm{n}}=0$ in this paper, but the following argument can be applied to any values of $q_{\bm{n}}$.
Additionally, we impose open boundary conditions on all boundaries.
Specifically, if a site $\bm{n}$ is at a boundary and a link~$(\bm{n}\pm\bm{e_\mu},\mu)$ does not exist, we define~$X_{\bm{n}\pm\bm{e_\mu},\mu}=1$.
Then, the corresponding Gauss's law operator~\eqref{eq:gauss-law} becomes a three (two) body operator at boundaries (corners) of the rectangular lattice.

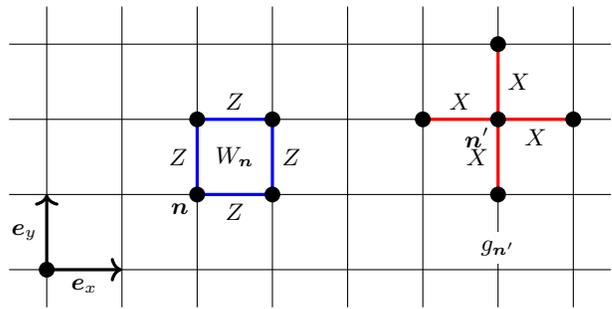
\begin{figure}[!bt]
\centering
    \begin{tikzpicture}

\foreach \y in{0,...,3}{
\draw[thin] (-0.5,\y)--(7.5,\y);
}
\foreach \x in{0,...,7}{
\draw[thin] (\x,-0.5)--(\x,3.5);
}

\coordinate (qO) at (0,0);\fill (qO) circle (3pt);
\coordinate (qX) at (1,0);
\coordinate (qY) at (0,1);

\draw[very thick, ->] (qO)--node[below]{$\bm{e}_x$}(qX);
\draw[very thick, ->] (qO)--node[left]{$\bm{e}_y$}(qY);

\coordinate (q00) at (2,1);
\coordinate (q01) at (2,2);
\coordinate (q10) at (3,1);
\coordinate (q11) at (3,2);

\draw [very thick, color=blue](q00)  --node[left, color=black] {$Z$}(q01)--node[above, color=black] {$Z$}(q11) --node[right, color=black] {$Z$}(q10)--node[below, color=black] {$Z$}(q00);
\node[] at (2.5,1.5) {$W_{\bm{n}}$};
\node[below left] at (q00) {$\bm{n}$};

\fill (q00) circle (3pt);
\fill (q01) circle (3pt);
\fill (q10) circle (3pt);
\fill (q11) circle (3pt);

\coordinate (q001) at (5,2);
\coordinate (q010) at (6,1);
\coordinate (q011) at (6,2);
\coordinate (q021) at (7,2);
\coordinate (q012) at (6,3);

\draw [very thick, color=red](q001)--node[above, color=black] {$X$}(q011) --node[below, color=black] {$X$}(q021);
\draw [very thick, color=red](q010)--node[left, color=black]{$X$}(q011)--node[right, color=black]{$X$}(q012);

\node[below left] at (q011) {$\bm{n}^\prime$};
\node[below,fill=white] at (6,0.5) {$g_{\bm{n}^\prime}$};

\fill (q001) circle (3pt);
\fill (q010) circle (3pt);
\fill (q011) circle (3pt);
\fill (q021) circle (3pt);
\fill (q012) circle (3pt);

\end{tikzpicture}
\caption{The magnetic term and Gauss's law in our convention. Blue links denote a magnetic term. Red links denote a Gauss's law operator.}
\label{fig:hamiltonian-terms}
\end{figure}

\subsection{QITE and Pauli pool reduction}
\label{subsec:qite-and-reduction-z2lgt}
In this subsection, we will apply the general reduction methods explained in Section~\ref{subsec:reduction-pauli-pool} to the $\mathbb{Z}_2$ LGT and show that it reduces the size of a Pauli pool. This was done in~\cite{wang2023symmetry} for a fixed support (one plaquette), but we give a general formula for the size of the Pauli pool for an arbitrary support. On top of that, we also apply the reduction via the quotient group (i.e., Prop.~\ref{prop:quotient}). 

Let us first explain the setups. We decompose the Hamiltonian into subterms as~$\{h^{(m)}\}_m=\{X_{\bm{n},\bm{e}_\mu}\}_{\bm{n},\mu} \cup \{W_{\bm{n}}\}_{\bm{n}}$ and apply the second-order Suzuki-Trotter decomposition to~$e^{-\tau H}$. 
We also use the ground state of the electric term, denoted by~$\ket{\Omega_{0}}:=\ket{+}^{\otimes N_{\text{link}}}$, as an initial state. Finally, the Pauli pool acting on a set of qubits (links) $\mathcal{D}:=\{q_1,\cdots, q_D\}$ is defined by
\begin{equation}\label{eq:pauli-group-in-domain}
    \mathcal{P}[\mathcal{D}]:=\{\sigma^{\mu_1}_{q_1}\otimes \cdots \otimes \sigma^{\mu_D}_{q_D}, \mu_i \in \{I,X,Y,Z\} \} \,.
\end{equation}

We call the region $\mathcal{D}$ the support of the Pauli pool $\mathcal{P}[\mathcal{D}]$.
We will omit $\mathcal{D}$-dependence when the support is fixed and obvious.

Under these setups, we employ the propositions given in Section~\ref{subsec:reduction-pauli-pool}.
First of all, one can show that the assumptions in Prop.~\ref{prop:reduction-by-reality} are satisfied for our settings. Therefore, the Pauli pool $\mathcal{P}$ is reduced to $\mathcal{P}_{\text{odd}}$.

We can further reduce the resulting Pauli pool using symmetry via Prop.~\ref{prop:reduction-symmetry}.
The symmetry group we consider is generated by Gauss's law as
\begin{equation}\label{eq:gauss-law-group}
 \mathcal{G}[\mathcal{D}]:=\Braket{g_{s_1},g_{s_2},\cdots, g_{s_{n_{\text{site}}(\mathcal{D})}}} \,,  
\end{equation}
where $s_i$ is a site in the support $\mathcal{D}$, $n_{\text{site}}(\mathcal{D})$ is the number of sites involved in the support~$\mathcal{D}$, and $g_n$ is defined by~Eq.~\eqref{eq:gauss-law}.
Now, we split elements in the Pauli pool as~$\bm{\sigma}=\bm{\sigma}_X \otimes \bm{\sigma}_{\bar{X}}$, where~$\bm{\sigma}_X$ is a tensor product of~$X$ and~$I$ operators, and~$\bm{\sigma}_{\bar{X}}$ is a product of~$Y$ and~$Z$ operators. 

To identify the condition under which $[\bm{\sigma},g]=0$, we follow a similar argument used to find the detectable errors with syndromes in the toric code~\cite{Kitaev:1997wr,Dennis:2001nw}.
Suppose that $\text{supp}(\bm{\sigma}_{\bar{X}})=\{q_1,\cdots ,q_{D'}\}$. We identify the set of qubits as a set of links where the qubits live. Then, if $\text{supp}(\bm{\sigma}_{\bar{X}})$ has a boundary vertex $s$ as Fig.~\ref{fig:gauss-law-explanation}, the corresponding Gauss's law operator $g_s$ overlaps with the support of $\bm{\sigma}_{\bar{X}}$ an odd number of times, resulting in $[\bm{\sigma},g_s]=[\bm{\sigma}_{\bar{X}},g_s]\neq 0$. On the other hand, if the support forms the closed loop, all the Gauss's law operators overlap with the $\bm{\sigma}_{\bar{X}}$ support an even number of times, and therefore $\bm{\sigma}_{\bar{X}}$ commutes with all of them.
To recap, we have found that
\begin{equation}
    [\bm{\sigma},g]=0, \forall g\in\mathcal{G}
    \Leftrightarrow
    \text{supp}(\bm{\sigma}_{\bar{X}}) \in \Gamma\,,
\end{equation}
where $\Gamma$ is a set of closed loops (which can be disjoint). 
In other words, the condition $[\bm{\sigma},g]=0$ enforces that~$\bm{\sigma}_{\bar{X}}$ should be gauge invariant, and therefore its support is geometrically similar to Wilson loops in the pure theory.
Therefore, the symmetrized Pauli group defined by Eq.~\eqref{eq:def-symm-pauli} is given as
\begin{equation}
 \mathcal{P}_{\mathcal{G}}=\{\bm{\sigma}\in\mathcal{P}\mid \text{supp}(\bm{\sigma}_{\bar{X}}) \in \Gamma\}   \,.
\end{equation}

\begin{figure}[!b]
\centering
\scalebox{0.7}{
    \begin{tikzpicture}

\foreach \y in{0,...,3}{
\draw[thin] (-3.5,\y)--(8.5,\y);
}
\foreach \x in{-3,...,8}{
\draw[thin] (\x,-0.5)--(\x,3.5);
}

\draw [line width =4.5pt, color=red!60](-2,3)--(-2,2) --(-2,1);
\draw [line width =4.5pt, color=red!60](-3,2)--(-2,2) --(-1,2);

\draw [line width =4.5pt, color=red!60](2,2)--(2,1) --(2,0);
\draw [line width =4.5pt, color=red!60](1,1)--(2,1) --(3,1);

\draw [line width =4.5pt, color=red!60](5,2)--(5,1) --(5,0);
\draw [line width =4.5pt, color=red!60](4,1)--(5,1) --(6,1);

\draw[line width=2.pt,color=blue] (-2,2)--(0,2)--(0,0)--(2,0)--(2,1)--(7,1)--(7,2)--(6,2)--(6,3)--(5,3)--(5,1);

\fill[black] (-2,2) circle (4.2pt);
\fill[white] (-2,2) circle (3pt);
\fill[black] (2,1) circle (4.2pt);
\fill[black] (5,1) circle (4.2pt);
\fill[white] (5,1) circle (3pt);

\end{tikzpicture}
}
\caption{
Gauss's law constraints on Pauli pool elements. Blue links represent the support of~${\bm{\sigma}}_{\bar{X}}$, and red links represent the support of the Gauss's law operator. At a black vertex (site), the number of overlapping links is even, and thus its eigenvalue does not change.  On the other hand, Gauss's law at white vertices overlaps with the blue support an odd number of times, flipping the sign of Gauss's law eigenvalues.}
\label{fig:gauss-law-explanation}
\end{figure}

Using Prop~\ref{prop:reduction-symmetry}, one can show that the Pauli pool $\mathcal{P}$ is reduced to $\mathcal{P}_\mathcal{G}$. As mentioned after Prop.~\ref{prop:reduction-symmetry}, this not only saves the measurement and gate cost, but also makes the resulting state gauge invariant, i.e., preserves all the eigenvalues of generators~$g_n$.

Combining the above two results, the reduction from the full pool $\mathcal{P}$ to 
\begin{align}
        \mathcal{P}_{\mathcal{G},\text{odd}} &:= \mathcal{P}_{\mathcal{G}} \cap \mathcal{P}_{\text{odd}}
        \\
        &=
        \{\bm{\sigma}\in\mathcal{P}\mid \text{supp}(\bm{\sigma}_{\bar{X}}) \in \Gamma,
        \notag\\ &\qquad
        \land \text{num}_Y(\bm{\sigma}_{\bar{X}}) \equiv 1 \pmod 2\}\,.
\end{align}
The number of elements (denoted by $|\bullet|$) in the reduced Pauli pool~$\mathcal{P}_{\mathcal{G},\text{odd}}$ is given by
\begin{align}\label{eq:num-elements-reduced-pool}
        |\mathcal{P}_{\mathcal{G},\text{odd}}[\mathcal{D}]|={2^{n_{\text{link}}(\mathcal{D})-1}\times(2^{n_{\text{plaq}}(\mathcal{D})}-1)}\,,
\end{align}
where we define the number of links (plaquettes) involved in the support $\mathcal{D}$ by $n_{\text{link (plaq)}}(\mathcal{D})$. 
See Appendix~\ref{app:pauli-pool-proof-z2lgt} for a proof.
Explicit values of the right-hand side of Eq.~\eqref{eq:num-elements-reduced-pool} for concrete examples are given in Tab.~\ref{tab:reduced-pauli-pool-z2lgt}. We observe a significant reduction using these methods, which saves the number of measurements for Eq.~\eqref{eq:linear-eq-coeffs-QITE} and the number of gates in the resulting real-time evolution~\eqref{eq:QITE-approx-real-time-evol}.
The formula~\eqref{eq:num-elements-reduced-pool} for a one-plaquette support (the first row in Tab.~\ref {tab:reduced-pauli-pool-z2lgt}) agrees with what was obtained in~\cite{wang2023symmetry}.

Finally, let us consider the Pauli pool reduction via the quotient group (Prop.~\ref{prop:reduction-combined}).
For this purpose, we classify the Gauss's law operators appearing in Eq.~\eqref{eq:gauss-law-group} as follows.
If all four $X$ operators in an operator $g_n$ are included in the support $\mathcal{D}$, then this operator is called a \textit{bulk} Gauss's law operator, and otherwise referred to as a \textit{boundary} Gauss's law operator (see Fig.~\ref{fig:bulk-gauss-law}). On top of that, the number of bulk Gauss's law operators for the support $\mathcal{D}$ is denoted by $n_{\mathcal{G}}(\mathcal{D})$, which is equal to the number of bulk sites in the region $\mathcal{D}$. 
With these definitions, the number of elements in the quotient group is given by 
\begin{align}
|\mathcal{P}_{\mathcal{G},\text{odd}}/\mathcal{G}|
&=\frac{|\mathcal{P}_{\mathcal{G},\text{odd}}|}{2^{n_{\mathcal{G}}(\mathcal{D})}}
\\
&=\frac{2^{n_\text{link}(\mathcal{D})-1}\times(2^{n_\text{plaq}(\mathcal{D})}-1)}{2^{n_{\mathcal{G}}(\mathcal{D})}}\,.
\end{align}
See Appendix~\ref{app:pauli-pool-proof-z2lgt} for a proof.
The explicit value of this formula for several supports are given in Tab.~\ref{tab:reduced-pauli-pool-z2lgt}. 
One can see that the reduction via reality and symmetry condition is significant (e.g. $255\to8$ in the first row). On the other hand, for the regions $\mathcal{D}$ in this table, $n_\mathcal{G}(\mathcal{D})$ is at most one, and thus the reduction via quotient $\mathcal{P}_{\mathcal{G},\textrm{odd}} \to \mathcal{P}_{\mathcal{G},\textrm{odd}}/\mathcal{G}$ is at most two
\footnote{
The reduction factor using the reality and symmetry conditions is 
$|\mathcal{P}_{\mathcal{G},\textrm{odd}}|/|\mathcal{P}|\sim 2^{-n_{\textrm{site}}(\mathcal{D})}$,
where $n_{\text{site}}(\mathcal{D})$ denotes the number of sites included in the region $\mathcal{D}$.
The quotient group method further reduces the cost by
$|\mathcal{P}_{\mathcal{G},\textrm{odd}}/\mathcal{G}|/|\mathcal{P}_{\mathcal{G},\textrm{odd}}|=2^{-n_{\mathcal{G}}(\mathcal{D})}$. Since $n_{\text{site}}(\mathcal{D})>n_{\mathcal{G}}(\mathcal{D})$ holds, the symmetry and reality reduction is more significant than the quotient one when the region $\mathcal{D}$ is sufficiently large.}.
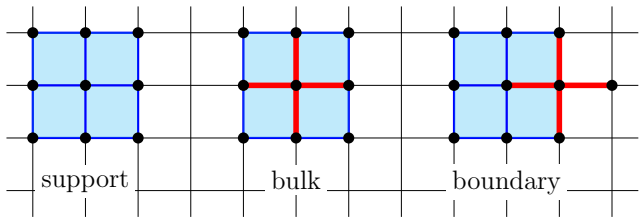
\begin{figure}[!t]
\centering
\scalebox{0.7}{
    \begin{tikzpicture}

\foreach \y in{0,...,3}{
\draw[thin] (-3.5,\y)--(8.5,\y);
}
\foreach \x in{-3,...,8}{
\draw[thin] (\x,-0.5)--(\x,3.5);
}

\coordinate (qm00) at (-3,1);
\coordinate (qm01) at (-3,2);
\coordinate (qm10) at (-2,1);
\coordinate (qm11) at (-2,2);
\coordinate (qm20) at (-3,3);
\coordinate (qm21) at (-2,3);
\coordinate (qm02) at (-1,1);
\coordinate (qm12) at (-1,2);
\coordinate (qm22) at (-1,3);

\draw [very thick, color=blue](qm00)  --(qm02)--(qm22) --(qm20)--(qm00);
\fill [color=cyan,fill opacity=0.25](qm00)  --(qm02)--(qm22) --(qm20)--(qm00);
\draw[very thick, color=blue] (qm01) -- (qm11) --(qm21);
\draw[very thick, color=blue] (qm10) -- (qm11) --(qm12);

\fill (qm00) circle (3pt);
\fill (qm01) circle (3pt);
\fill (qm10) circle (3pt);
\fill (qm11) circle (3pt);
\fill (qm20) circle (3pt);
\fill (qm21) circle (3pt);
\fill (qm02) circle (3pt);
\fill (qm12) circle (3pt);
\fill (qm22) circle (3pt);

\coordinate (q00) at (1,1);
\coordinate (q01) at (1,2);
\coordinate (q10) at (2,1);
\coordinate (q11) at (2,2);
\coordinate (q20) at (1,3);
\coordinate (q21) at (2,3);
\coordinate (q02) at (3,1);
\coordinate (q12) at (3,2);
\coordinate (q22) at (3,3);

\node[below, fill=white,font=\fontsize{15pt}{12pt}\selectfont] at (-2,0.5) {support};

\draw [very thick, color=blue](q00)  --(q02)--(q22) --(q20)--(q00);
\fill [color=cyan,fill opacity=0.25](q00)  --(q02)--(q22) --(q20)--(q00);
\draw[line width =3pt, color=red] (q01) -- (q11) --(q21);
\draw[line width =3pt, color=red] (q10) -- (q11) --(q12);
\fill (q00) circle (3pt);
\fill (q01) circle (3pt);
\fill (q10) circle (3pt);
\fill (q11) circle (3pt);
\fill (q20) circle (3pt);
\fill (q21) circle (3pt);
\fill (q02) circle (3pt);
\fill (q12) circle (3pt);
\fill (q22) circle (3pt);

\node[below, fill=white,font=\fontsize{15pt}{12pt}\selectfont] at (2,0.5) {bulk};

\coordinate (q001) at (6,2);
\coordinate (q010) at (7,1);
\coordinate (q011) at (7,2);
\coordinate (q021) at (8,2);
\coordinate (q012) at (7,3);

\coordinate (q101) at (5,1);
\coordinate (q102) at (5,2);
\coordinate (q103) at (5,3);
\coordinate (q111) at (6,1);
\coordinate (q112) at (6,2);
\coordinate (q113) at (6,3);
\coordinate (q121) at (7,1);
\coordinate (q122) at (7,2);
\coordinate (q123) at (7,3);

\draw[very thick, color =blue] (q101) -- (q103) -- (q123) -- (q121) --(q101);
\fill [color=cyan,fill opacity=0.25](q101) -- (q103) -- (q123) -- (q121) --(q101);

\draw[very thick, color=blue] (q102) -- (q112) --(q122);
\draw[very thick, color=blue] (q111) -- (q112) --(q113);

\draw [line width =3pt, color=red](q001)--(q011) --(q021);
\draw [line width =3pt, color=red](q010)--(q011)--(q012);

\node[below, fill=white,font=\fontsize{15pt}{12pt}\selectfont] at (6,0.5) {boundary};

\fill (q010) circle (3pt);
\fill (q011) circle (3pt);
\fill (q021) circle (3pt);
\fill (q012) circle (3pt);
\fill (q101) circle (3pt);
\fill (q102) circle (3pt);
\fill (q103) circle (3pt);
\fill (q111) circle (3pt);
\fill (q112) circle (3pt);
\fill (q113) circle (3pt);

\end{tikzpicture}
}
\caption{Bulk and boundary Gauss's law operators. 
In this example, we consider a support composed of four plaquettes. 
Blue links indicate the support, and red links show the qubits on which Gauss's law operators act. The left is the support composed of four plaquettes. The center example represents the bulk Gauss's law, and the right is the boundary Gauss's law.}
\label{fig:bulk-gauss-law}
\end{figure}


\def\rscale{0.6}
\begin{table*}[!tb]
\centering
\caption{Pauli pool operator counts per support with and without the reductions. Blue vertices denote fully included Gauss's laws. Boundaries of the entire lattice are not taken into account.}
\label{tab:reduced-pauli-pool-z2lgt}
\tabcolsep = 0.7cm
\begin{tabular}{c|c||ccc}
\hline
\rule{0pt}{15pt}

\raisebox{3pt}{support $\mathcal{D}$} & \raisebox{3pt}{the maximum weight $D$} & \raisebox{3pt}{$|\mathcal{P}\setminus I^{\otimes {D}}|$} & \raisebox{3pt}{$|\mathcal{P}_{\mathcal{G},\text{odd}}|$}&\raisebox{3pt}{$|\mathcal{P}_{\mathcal{G},\text{odd}}/\mathcal{G}|$}\\
\hline
\rule{0pt}{25pt}
\raisebox{9pt}{\scalebox{\rscale}{

\begin{tikzpicture}[baseline={([yshift=-.8ex]current bounding box.center)}]
\coordinate (q00) at (0,0);\fill (q00) circle (3pt);
\coordinate (q01) at (0,1);\fill (q01) circle (3pt);
\coordinate (q10) at (1,0);\fill (q10) circle (3pt);
\coordinate (q11) at (1,1);\fill (q11) circle (3pt);

\draw [very thick](q00)  -- (q01)-- (q11) -- (q10)--(q00);
\end{tikzpicture}
}}&\raisebox{9pt}{$4$} & \raisebox{9pt}{$255$} & \raisebox{9pt}{$8$} & \raisebox{9pt}{$8$}\\\hline
\rule{0pt}{25pt}

\raisebox{9pt}{\scalebox{\rscale}{
\begin{tikzpicture}[baseline={([yshift=-.8ex]current bounding box.center)}]

\coordinate (q00) at (0,0);\fill (q00) circle (3pt);
\coordinate (q01) at (0,1);\fill (q01) circle (3pt);
\coordinate (q10) at (1,0);\fill (q10) circle (3pt);
\coordinate (q11) at (1,1);\fill (q11) circle (3pt);
\coordinate (q20) at (2,0);\fill (q20) circle (3pt);
\coordinate (q21) at (2,1);\fill (q21) circle (3pt);

\draw [very thick](q11) --(q01)--(q00) --(q10);
\draw [very thick](q10)--(q20) --(q21)--(q11);
\draw[very thick] {(q10)--(q11)};

\end{tikzpicture}
}}&\raisebox{9pt}{$7$} & \raisebox{9pt}{$16383$} & \raisebox{9pt}{$192$} & \raisebox{9pt}{$192$}\\\hline
\rule{0pt}{25pt}
\raisebox{9pt}{\scalebox{\rscale}{
\begin{tikzpicture}[baseline={([yshift=-.8ex]current bounding box.center)}]
\coordinate (q00) at (0,0);\fill (q00) circle (3pt);
\coordinate (q01) at (0,1);\fill (q01) circle (3pt);
\coordinate (q10) at (1,0);\fill (q10) circle (3pt);
\coordinate (q11) at (1,1);\fill (q11) circle (3pt);
\coordinate (q20) at (2,0);\fill (q20) circle (3pt);
\coordinate (q21) at (2,1);\fill (q21) circle (3pt);
\coordinate (q30) at (3,0);\fill (q30) circle (3pt);
\coordinate (q31) at (3,1);\fill (q31) circle (3pt);

\draw [very thick](q11)--(q01)--(q00) --(q10);
\draw [very thick](q20)--(q30) --(q31)--(q21);
\draw [very thick](q11)--(q21);
\draw [very thick](q10)--(q20);
\draw[very thick] {(q10)--(q11)};
\draw[very thick] {(q21)--(q20)};
\end{tikzpicture}
}}&\raisebox{9pt}{$10$}& \raisebox{9pt}{$1048575$} & \raisebox{9pt}{$2048$} & \raisebox{9pt}{$2048$} \\\hline
\rule{0pt}{45pt}

\raisebox{18pt}{\scalebox{\rscale}{
\begin{tikzpicture}[baseline={([yshift=-.8ex]current bounding box.center)}]

\coordinate (q00) at (0,0);\fill (q00) circle (3pt);
\coordinate (q01) at (0,1);\fill (q01) circle (3pt);
\coordinate (q10) at (1,0);\fill (q10) circle (3pt);
\coordinate (q11) at (1,1);\fill (q11) circle (3pt);
\coordinate (q20) at (2,0);\fill (q20) circle (3pt);
\coordinate (q21) at (2,1);\fill (q21) circle (3pt);
\coordinate (q02) at (0,2);\fill (q02) circle (3pt);
\coordinate (q12) at (1,2);\fill (q12) circle (3pt);

\draw [very thick](q01)--(q02) --(q12)--(q11);
\draw [very thick](q01)--(q00) --(q10);
\draw [very thick](q10)--(q20) --(q21)--(q11);
\draw[very thick] {(q10)--(q11)};
\draw[very thick] {(q01)--(q11)};
\fill[color=blue] (q11) circle (5pt);
\end{tikzpicture}
}}&\raisebox{18pt}{$10$}& \raisebox{18pt}{$1048575$} & \raisebox{18pt}{$2048$} & \raisebox{18pt}{$1024$} \\\hline
\rule{0pt}{45pt}

\raisebox{18pt}{\scalebox{\rscale}{
\begin{tikzpicture}[baseline={([yshift=-.8ex]current bounding box.center)}]

\coordinate (q00) at (0,0);\fill (q00) circle (3pt);
\coordinate (q01) at (0,1);\fill (q01) circle (3pt);
\coordinate (q10) at (1,0);\fill (q10) circle (3pt);
\coordinate (q11) at (1,1);\fill (q11) circle (3pt);
\coordinate (q20) at (2,0);\fill (q20) circle (3pt);
\coordinate (q02) at (0,2);\fill (q02) circle (3pt);
\coordinate (q21) at (2,1);\fill (q21) circle (3pt);
\coordinate (q12) at (1,2);\fill (q12) circle (3pt);
\coordinate (q22) at (2,2);\fill (q22) circle (3pt);

\draw [very thick](q12)--(q02)--(q01);
\draw [very thick](q01)--(q00)--(q10);
\draw [very thick](q10)--(q20)--(q21);
\draw [very thick](q21)--(q22)--(q12);
\draw[very thick] {(q11)--(q10)};
\draw[very thick] {(q11)--(q01)};
\draw[very thick] {(q11)--(q12)};
\draw[very thick] {(q11)--(q21)};
\fill[color=blue] (q11) circle (5pt);
\end{tikzpicture}
}}&\raisebox{18pt}{$12$}& \raisebox{18pt}{$16777215$} & \raisebox{18pt}{$30720$} & \raisebox{18pt}{$15360$} \\\hline
\end{tabular}

\end{table*}


\section{Numerical simulation}
\label{sec:numerical-simulation}
\subsection{Tensor network simulation settings}
To understand the algorithmic error and its dependence on the system and time-step size, we perform tensor network simulations of the QITE. Specifically, we use a matrix product state (MPS) where the systematic errors are controlled by the bond dimension and singular value decomposition (SVD) cutoff. We take the bond dimension large enough so that the systematic errors in the MPS representation are negligible, and the results can be used to study hardware algorithm errors in the absence of quantum noise. 
The gate operations in the QITE are implemented by the time evolving block decimation (TEBD) algorithm~\cite{Vidal:2003lvx}, and the coefficients~\eqref{eq:linear-eq-coeffs-QITE} are evaluated in MPS representations, then the linear equation~\eqref{eq:linear-eq-QITE}
is solved via a standard solver.
For comparison, we compare the QITE results (labeled \texttt{QITE}) with those from the following two methods.
The first is the Suzuki-Trotter decomposition of the imaginary-time evolution (labeled~\texttt{ITE})~\footnote{We distinguish ``(Q)ITE" from our numerical results written as \texttt{(Q)ITE}. The former is a theoretical (algorithmic) object while the latter is its numerical implementation using the specific methods.}.
This only involves the Suzuki-Trotter and TEBD errors and separates the unitary approximation error in the QITE from these errors.
The second is a density matrix renormalization group (DMRG)~\cite{PhysRevLett.69.2863}, which is the variational method using MPS to obtain the ground state energy. It provides a reference value for validating the \texttt{QITE} and \texttt{ITE} results~\footnote{To be more precise, we perform the DMRG study using the transverse field Ising model on the dual lattice, which is shown to be equivalent to the pure $\mathbb{Z}_2$ LGT~\cite{Wegner:1971app}.}.
For the simulation parameters, we set the SVD cutoff of TEBD and DMRG to $10^{-14}$, the number of DMRG sweeps to $5$, and the maximum bond dimension for DMRG to $200$.
Tensor network codes are implemented using a tensor network library~\texttt{Itensor.jl}~\cite{ITensor}.


\subsection{Results for $(N_{\text{site}}^x,N_{\text{site}}^y)=(3,3)$}

In this section, we present simulation results for a fixed system size $(N_{\text{site}}^x,N_{\text{site}}^y)=(3,3)$ (see the left panel of Fig.~\ref{fig:system_description}) with open boundary conditions. The initial state is the ground state of the electric field term given as~$\ket{\Omega_0} = \ket{+}^{\otimes N_\text{link}}$. We also fix the maximal imaginary time as $\tau_\text{max}=2.0$, for which we confirm the convergence (See the left panel of Fig.~\ref{fig:tau-epsilon}).

\begin{figure}[!t]
\centering
\scalebox{0.83}{
    \begin{tikzpicture}

\foreach \y in{0,1,2}{
\draw[very thick] (0,\y)--(2,\y);
\foreach \x in{0,...,2}{
\draw[very thick] (\x,0)--(\x,2);
\fill (\x,\y) circle (3pt);
}
}

\foreach \y in{0,1,2}{
\ifnum\y=0
\draw[line width =3pt] (4,\y)--(5.5,\y);
\draw[line width =3pt] (6.5,\y)--(8,\y);
\else
\ifnum\y=2
\draw[line width =3pt] (4,\y)--(5.5,\y);
\draw[line width =3pt] (6.5,\y)--(8,\y);
\else
\draw[very thick] (4,\y)--(5.5,\y);
\draw[very thick] (6.5,\y)--(8,\y);
\fi\fi
\foreach \x in{4,5,7,8}{
\ifnum\x=4
\draw[line width =3pt] (\x,0)--(\x,2);
\else
\ifnum\x=8
\draw[line width =3pt] (\x,0)--(\x,2);
\else
\draw[very thick] (\x,0)--(\x,2);
\fi\fi
\fill (\x,\y) circle (3pt);
}
}

\node[] at (6,1) {$\cdots$};

\node[left] at (-0.5,1) {$N_{\text{site}}^y=3$};
\node[below] at (1,-0.5) {$N_{\text{site}}^x=3$};

\node[below] at (6,-0.5) {$3\le N_{\text{site}}^x\le7$};



\draw [line width =3pt](0,0)--(0,2) --(2,2)--(2,0)--(0,0);


\end{tikzpicture}
}
\caption{System sizes studied in this work. We employ a ladder-like geometry with fixed $N_{\text{site}}^y=3$ and open boundary conditions. The smallest system, $(N_{\text{site}}^x,N_{\text{site}}^y)=(3,3)$, is shown on the left and the largest system, $(N_{\text{site}}^x,N_{\text{site}}^y)=(7,3)$, is shown on the right.}
\label{fig:system_description}
\end{figure}
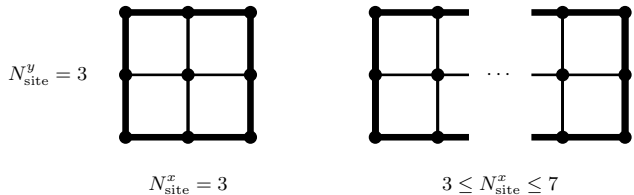

As we mentioned in Section~\ref{subsec:reduction-pauli-pool}, the size of the support of Pauli pools is shown to be controlled by the correlation length~\cite{Motta:2019yya}.
Although taking a sufficiently large support yields state preparation with provable error bounds, we can instead use a smaller support as a heuristic. This further reduces gate and measurement costs, at the expense of extra errors. 
Specifically, we consider Pauli pools at each step with the maximum Pauli-string weight of four (i.e. each Pauli string in the Pauli pool has weight at most four), which are given as follows (see also Tab.~\ref{fig:support-pauli-pool-z2lgt}). 
For a plaquette term~$W_{\bm{n}}$, we use $\mathcal{P}[\mathcal{D}]$ defined by Eq.~\eqref{eq:pauli-group-in-domain} with a support~$\mathcal{D}=\text{supp}(W_{\bm{n}})$, which is
a set of qubits that the plaquette acts on. 
On the other hand, each link~$(\bm{n},\bm{e}_{\mu})$ is involved in the two adjacent plaquettes, $W_{\bm{n}}$ and $W_{\bm{n}-\bm{e}_{\nu}}$, where~$\nu=y$~($\nu=x$) for~$\mu=x$~($\mu=y$).
We thus define
\begin{align}
    \mathcal{D}_{+} = \text{supp}(W_{\bm{n}})\,,\quad
    \mathcal{D}_{-} = \text{supp}(W_{\bm{n}-\bm{e}_{\nu}})\,,
\end{align}
and construct the corresponding Pauli pool as $\mathcal{P} = \mathcal{P}[\mathcal{D}_+] \cup \mathcal{P}[\mathcal{D}_-]$\,.

\begin{table}[!b]
\centering
\caption{The support of the Pauli pool for each term of the Hamiltonian. Blue links denote the links involved in the considered term.}
\label{fig:support-pauli-pool-z2lgt}
    \begin{tikzpicture}
\coordinate (q00) at (0,0);
\coordinate (q01) at (0,1);
\coordinate (q10) at (1,0);
\coordinate (q11) at (1,1);

\draw [ultra thick, color=blue](q00)  --(q01)--(q11) --(q10)--(q00);
\node[] at (0.5,0.5) {$W_{\bm{n}}$};
\node[below left] at (q00) {$\bm{n}$};
\fill (q00) circle (3pt);
\fill (q01) circle (3pt);
\fill (q10) circle (3pt);
\fill (q11) circle (3pt);

\coordinate (q20) at (4.0,0);
\coordinate (q21) at (4.0,1);
\coordinate (q30) at (5.0,0);
\coordinate (q31) at (5.0,1);
\draw [ultra thick, color=blue](q20)  --(q21)--(q31) --(q30)--(q20);
\node[below left] at (q20) {$\bm{n}$};

\fill (q20) circle (3pt);
\fill (q21) circle (3pt);
\fill (q30) circle (3pt);
\fill (q31) circle (3pt);

\draw[thin] (-0.5,-0.5) -- (7.5,-0.5);

\coordinate (q20) at (0,-2);
\coordinate (q30) at (1,-2);

\draw [very thick, color=blue](q20)  --node[above,color=black] {$X_{\bm{n},x}$}(q30);
\fill [fill=green, fill opacity=0.5](q20)--(q30);
\node[below left] at (q20) {$\bm{n}$};

\fill (q20) circle (3pt);
\fill (q30) circle (3pt);

\node[] at (4.65,-2) {and};

\coordinate (q120) at (3,-2);
\coordinate (q121) at (3,-1);
\coordinate (q130) at (4,-2);
\coordinate (q131) at (4,-1);
\draw [very thick, color=black](q120)  --(q121)--(q131) --(q130);
\draw [ultra thick, color=blue](q130)--(q120);

\node[below left] at (q120) {$\bm{n}$};
\fill (q120) circle (3pt);
\fill (q121) circle (3pt);
\fill (q130) circle (3pt);
\fill (q131) circle (3pt);

\coordinate (q220) at (5.5,-3);
\coordinate (q221) at (5.5,-2);
\coordinate (q230) at (6.5,-3);
\coordinate (q231) at (6.5,-2);
\draw [very thick, color=black](q221)--(q220)--(q230)--(q231);
\draw [ultra thick, color=blue](q221)--(q231);

\node[above left] at (q221) {$\bm{n}$};
\fill (q220) circle (3pt);
\fill (q221) circle (3pt);
\fill (q230) circle (3pt);
\fill (q231) circle (3pt);

\draw[thin] (-0.5,-3.5) -- (7.5,-3.5);

\coordinate (q40) at (0.5,-4);
\coordinate (q50) at (0.5,-5);

\draw [very thick, color=blue](q50)  --node[right,color=black] {$X_{\bm{n},y}$}(q40);
\fill [fill=green, fill opacity=0.5](q40)--(q50);
\node[below left] at (q50) {$\bm{n}$};
\fill (q40) circle (3pt);
\fill (q50) circle (3pt);

\node[] at (4.75,-4.5) {and};

\coordinate (q420) at (3,-5);
\coordinate (q421) at (3,-4);
\coordinate (q430) at (4,-5);
\coordinate (q431) at (4,-4);
\draw [very thick, color=black](q431)--(q421)--(q420)--(q430);
\draw [ultra thick, color=blue](q430)--(q431);

\node[below right] at (q430) {$\bm{n}$};
\fill (q420) circle (3pt);
\fill (q421) circle (3pt);
\fill (q430) circle (3pt);
\fill (q431) circle (3pt);

\coordinate (q520) at (5.5,-5);\fill (q520) circle (3pt);
\coordinate (q521) at (5.5,-4);\fill (q521) circle (3pt);
\coordinate (q530) at (6.5,-5);\fill (q530) circle (3pt);
\coordinate (q531) at (6.5,-4);\fill (q531) circle (3pt);
\draw [very thick, color=black](q521)--(q531)--(q530)--(q520);
\draw [ultra thick, color=blue](q520)--(q521);

\node[below left] at (q520) {$\bm{n}$};

\draw[thin] (2,2.25) -- (2,-5.5);
\draw[thin] (-0.5,-5.5) -- (7.5,-5.5);
\draw[thin] (-0.5,1.5) -- (7.5,1.5);
\draw[thin] (-0.5,2.25) -- (7.5,2.25);

\node [] at (0.5,1.85) {terms};
\node [] at (4.75,1.85) {Pauli pool supports};

\end{tikzpicture}

\end{table}

We implement both the QITE and (Suzuki-Trotterized) ITE using the TEBD, and compare results with the DMRG energy. 
At each time step, the relative errors \begin{align}
    \varepsilon(\tau) = \left|\frac{E(\tau) - E_\text{DMRG}}{E_\text{DMRG}}\right|
\end{align}
are computed.
The left panel of Fig.~\ref{fig:tau-epsilon} shows the relative errors $\varepsilon(\tau)$ as a function of $\tau$ ($0\leq \tau\leq \tau_{\text{max}}=2.0$) for a small coupling value~$\lambda=0.5$. 
The value of time steps is varied as $\Delta \tau \in \{0.05, 0.0125\}$.
For all the cases, one can observe that the relative errors converge to the value below $10^{-6}$ for $\tau=2.0$, which is further improved as the number of steps increases. 
Moreover, the \texttt{QITE} results agree with those from the \texttt{ITE}, which shows that the QITE algorithm can approximate the Suzuki-Trotterized ITE, even for a limited Pauli pool (the maximum Pauli-string weight of four), at least for the small coupling value and system size.

Next, the dependence on coupling values is studied, and the results are shown in the right panel of Fig~\ref{fig:tau-epsilon}.
We see that the accuracy gets worse with increasing the value of~$\lambda$ (except for a few points for $\Delta\tau = 0.0125$), as expected from the fact that we use the weak-coupling (electric) ground state as an initial state.
However, the relative errors for coupling values in $0.5\leq\lambda\leq 5.0$ are still below~$0.1\:\%$.
One can also observe that increasing the time steps improves the results for all coupling values.


\begin{figure*}[!t]
\begin{minipage}{0.48\linewidth}
\centering
\includegraphics[width=0.9\columnwidth]{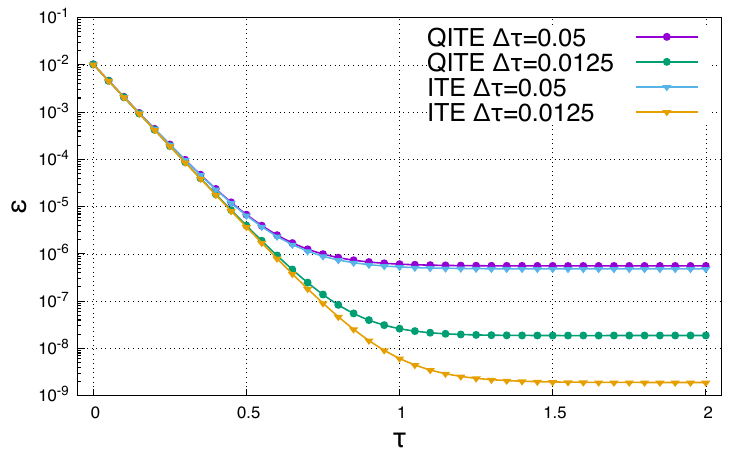}
\end{minipage}
\hspace{0.02\linewidth}
\begin{minipage}{0.48\linewidth}

\centering
\includegraphics[width=0.9\columnwidth]{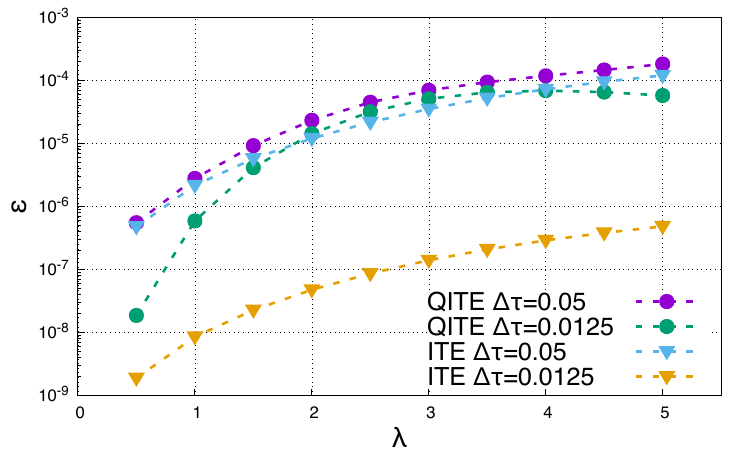}
\end{minipage}
\caption{Relative error of classically simulated QITE and ITE for $(N_{\text{site}}^x,N_{\text{site}}^y)=(3,3)$ with open boundary conditions. The initial state is the ground state of the electric term, and the Pauli pool is composed of the set of plaquettes. The left plot shows relative error as a function of imaginary time $\tau$ with the coupling value~$\lambda=0.5$ fixed. The right plot shows the dependence of the relative error on the coupling value~$\lambda$ for $\tau=2.0$.}

\label{fig:tau-epsilon}
\label{lambda-epsilon}

\end{figure*}


\subsection{Time step and system size dependence}

In this subsection, we present the dependence of the relative error on the time steps and system sizes. The simulation settings for larger systems are the same as those for~$(N_{\text{site}}^x,N_{\text{site}}^y)=(3,3)$.

First, we study the $\Delta \tau$ dependence for $\lambda=0.5, 2.0$ with system sizes $(N_{\text{site}}^x,N_{\text{site}}^y)=(3,3), (6,3)$ (see the right panel of Fig.~\ref{fig:system_description}). 
Fig. \ref{fig:dtau-epsilon} shows the relative error $\varepsilon$ at $\tau=\tau_\text{max}$ as a function of time step $\Delta\tau$. The values of time steps are taken as $\Delta \tau \in \{0.0125, 0.025, 0.05, 0.1\}$.
To interpret the results, we classify the possible sources of errors in the (Q)ITE algorithms as follows.
\begin{enumerate}[label=(\roman*),ref= (\roman*)]
  \item \label{itm:ite} Suzuki-Trotterized-ITE errors (\texttt{ITE} and \texttt{QITE})
  \begin{enumerate}[label=(\roman{enumi}-\alph*),ref=\roman{enumi}-\alph*]
    \item \label{itm:ite-finite-tau} finite $\tau_{\text{max}}$ effect
    \item \label{itm:ite-st} Suzuki-Trotter decomposition of ITE operator~\eqref{eq:ST-decomomp-1st}
  \end{enumerate}
  \item \label{itm:qite} QITE-specific errors (\texttt{QITE})
  \begin{enumerate}[label=(\roman{enumi}-\alph*),ref=(\roman{enumi}-\alph*)]
    \item \label{itm:qite-unitary-approx} unitary approximation of ITE operator~$e^{-\Delta\tau h^{(m)}}/\mathcal{N}\approx e^{-i\Delta\tau A}$ with a fixed Pauli pool
    \item \label{itm:qite-st} Suzuki-Trotter decomposition of unitary operator~$e^{-i\Delta\tau\tilde{A}}$
  \end{enumerate}
  \item \label{itm:mps} MPS errors (\texttt{ITE} and \texttt{QITE})
\end{enumerate}
The ITE~\ref{itm:ite} and MPS~\ref{itm:mps} errors are present both in \texttt{QITE} and \texttt{ITE}.
Therefore, one can consider that the difference between the \texttt{QITE} and \texttt{ITE} results comes from the QITE-specific error~\ref{itm:qite}.
We confirm that $\varepsilon$ for both \texttt{ITE} and \texttt{QITE} decreases when $\Delta\tau$ becomes smaller for $\lambda=0.5$.
This implies that not only the ITE error~\ref{itm:ite}, but also QITE-specific errors~\ref{itm:qite} are controlled by $\Delta\tau$ for a small coupling value ($\lambda=0.5$).
However, for $\lambda=2.0$, one can observe that ITE errors decrease as $\Delta\tau$ becomes smaller, while QITE errors saturate at some point~\footnote{
We also give additional results for a larger coupling value, $\lambda=3.5$, in Appendix~\ref{sec:additional-results}. Their qualitative behavior is similar to that observed at $\lambda=2.0$. The critical coupling value in the thermodynamic limit is $\lambda_c=3.04438$~\cite{Blote:2002ieo}. The difference between the small and large coupling behaviors observed here does not correspond sharply to this critical value. Nevertheless, finite-size effects can smooth and shift signatures associated with the thermodynamic-limit transition, and some of the behavior around $\lambda=2.0$ may be related to finite-size precursors of the phase transition. It would be interesting to clarify this relation using larger size simulation.
}.
Since the Suzuki-Trotter error in QITE~\ref{itm:qite-st} is considered to decrease with $\Delta\tau\to0$, we expect that this saturation of error mainly comes from unitary approximation error~\ref{itm:qite-unitary-approx}. This error can be mitigated by increasing the
maximum weight of Pauli strings in the Pauli pool  and/or using other initial states.
Moreover, the discrepancy between \texttt{ITE} and \texttt{QITE} for small $\Delta\tau$ becomes larger for $N_{\text{site}}=(6,3)$ than that for $N_{\text{site}}=(3,3)$. This is consistent with the argument above, since the effect of the Pauli weight is expected to increase with system size.


\begin{figure*}[!bt]
\begin{minipage}{0.48\linewidth}
\centering
\includegraphics[width=0.9\columnwidth]{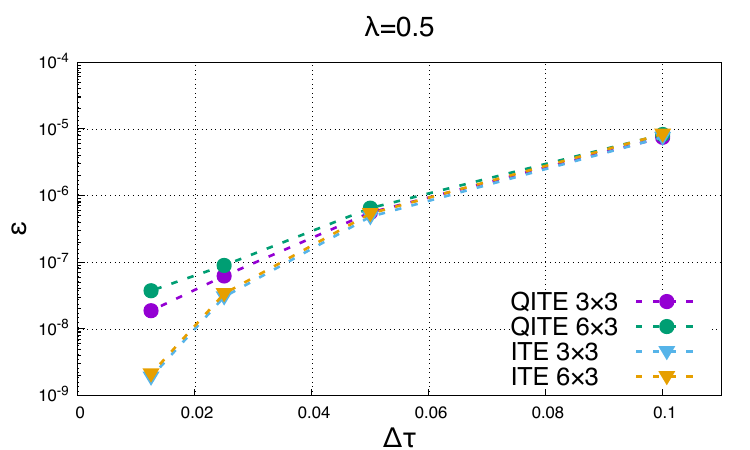}
\end{minipage}
\hspace{0.02\linewidth}
\begin{minipage}{0.48\linewidth}
\includegraphics[width=0.9\columnwidth]{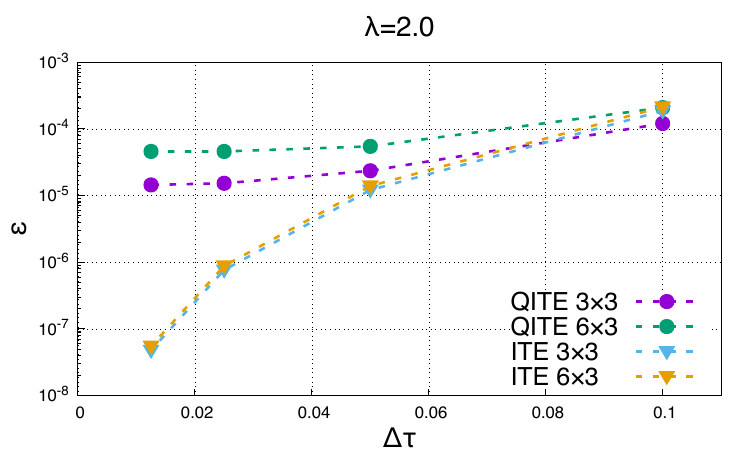}
\end{minipage}

\caption{Relative error of classically simulated QITE and ITE at the final imaginary time with respect to the DMRG energy as a function of imaginary time step. System size is $(N_{\text{site}}^x,N_{\text{site}}^y)=(3,3)$ to $(N_{\text{site}}^x,N_{\text{site}}^y)=(6,3)$, open boundary conditions. $\lambda =0.5$ for the left panel and $\lambda=2.0$ for the right panel, respectively. The initial state is the ground state of electric term. The QITE and ITE series correspond to results obtained using QITE and ITE, respectively.}

\label{fig:dtau-epsilon}

\end{figure*}



\begin{figure}[!t]
\centering
\includegraphics[width=0.98\columnwidth]{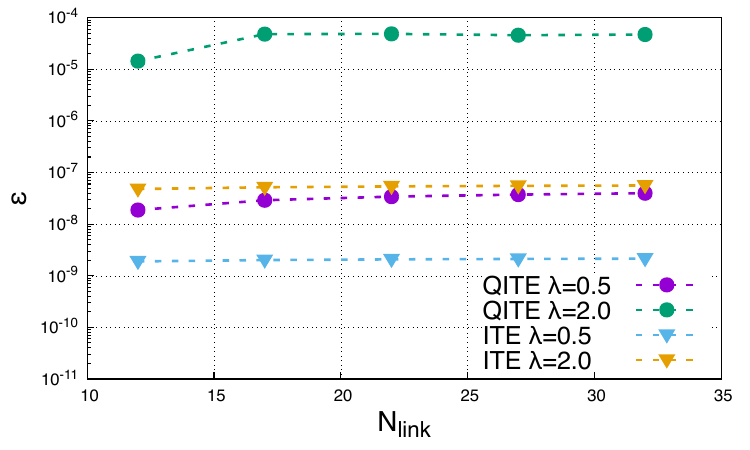}

\caption{Relative error of classically simulated QITE and ITE at the final imaginary time with respect to the DMRG energy as a function of the number of links. System size is $(N_{\text{site}}^x,N_{\text{site}}^y)=(3,3)$ to $(N_{\text{site}}^x,N_{\text{site}}^y)=(7,3)$, $\lambda =0.5$ and $\lambda =2.0$, open boundary conditions, imaginary time step $\Delta\tau=0.0125$. The initial state is the ground state of electric terms. The QITE and ITE series correspond to results obtained using QITE and ITE, respectively.}
\label{fig:system_size}
\end{figure}

We then investigate the system-size dependence in more detail.
Fig.~\ref{fig:system_size} shows the relative error $\varepsilon$ at $\tau=\tau_\text{max}$ as a function of the number of links (qubits), which allows us to see the system size dependence more clearly.
In this plot, the number of sites along the~$x$ axis varies as $N_\text{site}^x\in\{3,\cdots,7\}$ while the number of sites along the~$y$ axis is fixed at $N_\text{site}^y=3$ (see Fig.~\ref{fig:system_description}). The total number of links is therefore given by
\begin{align}
    N_\text{link} = N_\text{site}^x(N_\text{site}^y - 1) + (N_\text{site}^x - 1)N_\text{site}^y.
\end{align}
We see that the \texttt{ITE} errors remain almost constant, while the \texttt{QITE} errors slightly increase with system size, which implies that the QITE-specific error~\ref{itm:qite} becomes larger for larger $N_{\text{link}}$. The increasing rate is larger for $\lambda=2.0$ than that of $\lambda=0.5$, which is consistent with the previous discussion.
We also note that the \texttt{QITE} results for $\lambda=0.5$ show better performance than those for $\lambda=2.0$. 
In other words, the behavior observed in the right panel of Fig.~\ref{fig:tau-epsilon} for $(N^x_{\text{site}},N^y_{\text{site}})=(3,3)$ is also present for other system sizes.

\section{Conclusion}
\label{sec:conclusion}
In this work, we have applied the deterministic QITE algorithm and the resource reduction using symmetry to the $(2+1)$-dimensional pure $\mathbb{Z}_2$ lattice gauge theory and verified its performance via tensor-network simulations. 
First, we constructed the set of Pauli operators that commute with symmetry generators, which generalizes the previous results~\cite{wang2023symmetry}. This leads to a significant reduction of both measurement and gate costs using the general method in Ref.~\cite{Sun:2020dzq}.
For example, the number of elements in the Pauli pool with a single plaquette support is reduced from 255 to 8. The reduction factors are even more substantial in the remaining cases, and we can further combine the resource reduction using the quotient group.
Then, the numerical simulation was performed using TEBD. The QITE results are compared with the Suzuki-Trotterized ITE (without unitary approximation) and the DMRG. We confirmed that the energy obtained from the QITE agrees with the DMRG up to $0.1 \%$ error for the coupling value $0.5\leq \lambda \leq 5.0$ up to a twelve-plaquette system (two links in the vertical direction and six links in the horizontal direction). 
The error tends to increase with the value of $\lambda$, which is consistent with the fact that we take the electric ($\lambda=0$) ground state as an initial state of the ITE.
Moreover, we investigated the dependence of the error on the time step size and system size. 
We confirm that a shorter time step size~$\Delta\tau$ results in a smaller error for $\lambda=0.5$, which implies controllability of QITE errors. The error convergence is also observed for $\lambda=2.0$, but worse than that for $\lambda=0.5$. We discussed that this may come from the limited size of the Pauli pool support and the initial state. The validation of this hypothesis and the improvement are left for future work.
Finally, we saw that the QITE error slowly increases with the system size, at least the ladder-like geometry in $12\leq N_{\text{link}} (=N_{\text{q}})\leq32$. It would be interesting to see how the error scales for larger systems and/or genuine 2D systems.

Several future directions can be considered. 
First, understanding the dependence on the Pauli support size in a larger lattice would be important to see how the resources scale.
Additionally, in the large coupling regime, it would be more efficient to use other initial states, such as the ground state in the strong-coupling (magnetic) limit.
Moreover, to study the effects of noise, a noisy simulation or implementation on the current hardware is also valuable.
In particular, it would be important to see the robustness of symmetry preservation during the QITE in the presence of noise. 
For the gauge theory perspective, it would also be essential to include dynamical fermions. This makes the Gauss's law operators weight-five (Eq.~\eqref{eq:gauss-law} times fermion operator $X_{\bm{n}}$), and the number of Pauli operators in the symmetric pool is obtained by counting the gauge-invariant operators, whose~$Y/Z$ support ($\text{supp}(\bm{\sigma}_{\bar{X}})$) takes the form of closed Wilson loops or open Wilson lines terminated by pairs of fermion operators.
Finally, an extension to the LGTs with (continuous) Abelian and non-Abelian gauge symmetry would also be important for LGT simulations.

\begin{acknowledgments}
We thank Yutaro Iiyama for providing the codes for generating a two-dimensional lattice, based on which we performed our numerical simulation.
This work is supported by JST PRESTO Grant Number JPMJPR25F5 and JST Grant Number JPMJPF2221.
MS is supported by Forefront Physics and Mathematics Program to Drive Transformation (FoPM), a World-leading Innovative Graduate Study (WINGS) Program, the University of Tokyo.
\end{acknowledgments}

\appendix
\section{Derivation of the linear equation in QITE}
\label{sec:derivation-QITE}
First, the distance can be expanded in terms of $\Delta \tau$ as
\begin{align}
    \mathcal{D}(a_I^{(m)}) &=
    \|\frac{1}{\Delta \tau}\left(
    \ket{\bar{\psi}^{(m)}_{\Delta\tau}} - \ket{\bar{\psi}}
    \right)-iA^{(m)}\ket{\bar{\psi}}\|^2
    + \mathcal{O}(\Delta \tau)
    \\
    &=
    \|\ket{\delta\psi^{(m)}}
        - iA^{(m)}\ket{\bar{\psi}}\|^2
    + \mathcal{O}(\Delta \tau)\,,
\end{align}
where we defined 
\begin{align}
    \ket{\delta\psi^{(m)}} &:=
    \frac{1}{\Delta \tau}\left(
    \ket{\bar{\psi}^{(m)}_{\Delta\tau}} - \ket{\bar{\psi}}
    \right)
    \\
    &=
    \frac{1}{\Delta \tau}\left(
    \frac{1}{\mathcal{N}(\ket{\psi^{(m)}_{\Delta\tau}})}e^{-\Delta \tau h^{(m)}} - 1
    \right)\ket{\bar{\psi}}
    \\
    &=
    \frac{1}{\Delta \tau}
    \left[
        \frac{1-\Delta\tau h^{(m)}}{\sqrt{1-2\braket{\bar{\psi}|h^{(m)}|\bar{\psi}}}}-1
        \right]\ket{\bar{\psi}} + \mathcal{O}(\Delta \tau)\,.
\end{align}
The stationary condition leads to the following equations:
\begin{align}
    0 &= \frac{\partial}{\partial a_{I}}\mathcal{D}(\{a_J^{(m)}\})
    \\
    &=
    \frac{\partial}{\partial a_{I}}
    \left[
    -2\imag\bra{\bar{\psi}}A^{(m)\dag}\ket{\delta\psi^{(m)}}
    + \bra{\bar{\psi}}A^{(m)\dag}A^{(m)}\ket{\bar{\psi}}
    \right] 
    \\
    &=
    -2 \imag\bra{\bar{\psi}}\sigma_I^\dagger\ket{\delta\psi^{(m)}}
    + \sum_{J} a_J^{(m)}
    \left(
    2\real    
    \bra{\bar{\psi}}\sigma_I^\dagger\sigma_J\ket{\bar{\psi}}
    \right)\\
    &=-b_I +\sum_{J}S_{IJ}a_J
\end{align}
for all $I$, where
\begin{align}
\begin{cases}
        S_{IJ}=2\real\bra{\bar{\psi}}\sigma_I^\dagger\sigma_J\ket{\bar{\psi}}\,,\\
    b_I = 2 \imag\bra{\bar{\psi}}\sigma_I^\dagger\ket{\delta\psi^{(m)}}.
\end{cases}    
\end{align}

Therefore, an optimal tuple of $\bm{a}=\{a_I\}_I$ can be found by solving the linear equation
\begin{align}
    \bm{Sx}=\bm{b}.
\end{align}

Additionally, one can show the explicit form of $b_I$ up to the first order in $\Delta\tau$
\begin{align}
    b_I=-\frac{2\imag\bra{\bar\psi}\sigma_I^\dag h^{(m)}\ket{\bar\psi}}{\sqrt{1-2\Delta\tau\bra{\bar\psi}h^{(m)}\ket{\bar\psi}}}+\mathcal{O}(\Delta\tau^2)\,.
\end{align}

\section{Proof of Pauli pool reduction}
\label{app:proof-pauli-pool-reduction}
\begin{definition}
    For a pair of Pauli pools $\mathcal{P}$ and $\mathcal{P}^\prime$, we introduce a binary relation $\mathcal{P}\sim\mathcal{P}^\prime$ by
    \begin{align}
    \mathcal{P}\sim\mathcal{P}^\prime\quad\text{iff}\quad \ket{\tilde\Psi_{\mathcal{P}}^{(m)}}=\ket{\tilde\Psi_{\mathcal{P}^\prime}^{(m)}}.
    \end{align}
    This implies that both Pauli pools $\mathcal{P}$ and $\mathcal{P}^\prime$ give the same imaginary time evolution in the QITE algorithm.
\end{definition}

\begin{lemma}
    Suppose that there exists a subset $\mathcal{P}_{0}\subset \mathcal{P}$ that satisfies
\begin{align}\label{eq:reduction-condition}
    \begin{cases}
        S_{IJ}=0 \quad \text{for all} \quad\sigma_I\in \mathcal{P}_{0}, 
        \sigma_J\in \mathcal{P}\backslash\mathcal{P}_{0}\,,
        \\
        b_I=0  \quad\text{for all} \quad\sigma_I \in \mathcal{P}_{0}\,.
    \end{cases}
\end{align}
Then the time evolution obtained via the linear equation~\eqref{eq:linear-eq-QITE} from the full Pauli pool $\mathcal{P}$ is equivalent to that obtained from the reduced pool $\mathcal{P}_{\text{red}}:= \mathcal{P}\setminus\mathcal{P}_{0}$. In other words, we have
\begin{align}
\mathcal{P}\sim\mathcal{P}_{\text{red}}\,.
\end{align}
\label{lem:trivial}
\end{lemma}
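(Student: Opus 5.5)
Order the pool so that $\mathcal{P}=\mathcal{P}_{\text{red}}\sqcup\mathcal{P}_0$, and write the linear system~\eqref{eq:linear-eq-QITE} in block form
\begin{align}
\begin{pmatrix} S_{\text{rr}} & S_{\text{r0}}\\ S_{\text{0r}} & S_{00}\end{pmatrix}
\begin{pmatrix} \bm{a}_{\text{r}}\\ \bm{a}_{0}\end{pmatrix}
=
\begin{pmatrix} \bm{b}_{\text{r}}\\ \bm{b}_{0}\end{pmatrix}.
\end{align}
The first assumption in~\eqref{eq:reduction-condition} gives $S_{\text{0r}}=0$. Since $S_{IJ}=2\real\langle\sigma_I\sigma_J\rangle$ and $\sigma_I\sigma_J=(\sigma_J\sigma_I)^\dagger$ for Hermitian Pauli strings, $S$ is real symmetric, so $S_{\text{r0}}=S_{\text{0r}}^T=0$ as well. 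The second assumption gives $\bm{b}_0=0$. The system therefore decouples into $S_{\text{rr}}\bm{a}_{\text{r}}=\bm{b}_{\text{r}}$, which is exactly the equation for the reduced pool $\mathcal{P}_{\text{red}}$ (its matrix and vector entries are the same expectation values restricted to $\mathcal{P}_{\text{red}}$), together with $S_{00}\bm{a}_0=0$.

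Next, take the solution $\tilde{\bm{a}}$ for the full pool to be the one produced by the standard solver. Because $S$ may be singular, I fix this to be the minimum-norm (pseudo-inverse) solution $S^{+}\bm{b}$. Since $S$ is block diagonal, $S^{+}=\mathrm{diag}(S_{\text{rr}}^{+},S_{00}^{+})$, so $\tilde{\bm{a}}_0=S_{00}^{+}\bm{0}=0$ and $\tilde{\bm{a}}_{\text{r}}=S_{\text{rr}}^{+}\bm{b}_{\text{r}}$, which is the reduced-pool solution. Then $\tilde{A}_{\mathcal{P}}=\sum_{\sigma\in\mathcal{P}_{\text{red}}}\tilde a_\sigma\sigma=\tilde A_{\mathcal{P}_{\text{red}}}$. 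Hence $e^{-i\Delta\tau\tilde A_{\mathcal{P}}}$ and $e^{-i\Delta\tau\tilde A_{\mathcal{P}_{\text{red}}}}$ act identically on the same input state, giving $\ket{\tilde\Psi^{(m)}_{\mathcal{P}}}=\ket{\tilde\Psi^{(m)}_{\mathcal{P}_{\text{red}}}}$, i.e.\ $\mathcal{P}\sim\mathcal{P}_{\text{red}}$.

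The main obstacle is this non-uniqueness. If $S_{00}$ is singular, other solutions with $\bm{a}_0\in\ker S_{00}$ exist, and they could a priori change the unitary. I would handle this either by fixing the solver convention as above, or by showing that any $\bm{a}_0\in\ker S_{00}$ satisfies $\bm{a}_0^T S_{00}\bm{a}_0=2\|A_0\ket{\bar\psi}\|^2=0$, where $A_0=\sum_{\sigma\in\mathcal{P}_0}a_\sigma\sigma$. Thus $A_0\ket{\bar\psi}=0$, so such components do not change the state to first order in $\Delta\tau$, which is the order at which the linear equation is derived in Appendix~\ref{sec:derivation-QITE}. I would state the lemma under the minimum-norm convention to obtain exact equality, and remark on the first-order equivalence for general solutions.

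A secondary point is that the relation $\sim$ is applied step by step. Once equality holds at step $(k,m)$, the next step starts from the same state for both pools. Provided the hypotheses~\eqref{eq:reduction-condition} hold at every step (which is how the lemma will be used in the propositions), induction over $(k,m)$ extends the equivalence to the whole QITE run.
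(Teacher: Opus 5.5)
Your proposal is correct and follows the same route as the paper: you block-decompose $\bm S$ and $\bm b$ along $\mathcal{P}_{\text{red}}\sqcup\mathcal{P}_0$, note that the zero-padded reduced solution $(\tilde{\bm a}',\bm 0)$ solves the full system, and conclude that the two unitaries coincide. Your extra steps make explicit what the paper's proof takes for granted: you use the symmetry of $\bm S$ to show the second off-diagonal block vanishes, and you fix the minimum-norm (pseudo-inverse) solution to handle non-uniqueness.
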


\begin{proof}
Suppose that there exists $\mathcal{P}_{0}\subset\mathcal{P}$. Then the matrix $\bm S$ and the vector $\bm b$ have the form
\begin{align}
    \bm{S}&=
    \begin{pmatrix}
        \bm S^\prime & \bm 0 \\
        \bm 0 &  \bm *\\
    \end{pmatrix}\,,\\
    \bm b&=
    \begin{pmatrix}
        \bm b^\prime\\
        \bm 0\\
    \end{pmatrix}\,.
\end{align}
Therefore, 
a solution of the equation $\bm{S}'\bm{a}'=\bm b^\prime$ denoted by $\tilde{\bm{a}}'$
gives that of the equation  $\bm{S}\bm{a}=\bm b$ by the relation
\begin{align}
    \tilde{\bm{a}} = \begin{pmatrix}
        \tilde{\bm{a}}'\\
        \bm 0^\prime
    \end{pmatrix}\,.
\end{align}
Thus, the real-time evolution via $\mathcal{P}$ is equivalent to that of $\mathcal{P}_{\text{red}}$,
\begin{equation}
    e^{iA_\mathcal{P}(\tilde{\bm{a}})\Delta t}\ket{\psi}
    = e^{iA_\mathcal{P_{\text{red}}}(\tilde{\bm{a}}')\Delta t}\ket{\psi}\,.
\end{equation}
\end{proof}

\begin{proposition}
\label{parity-red}
Suppose that all elements of the matrix (vector) representation of the Hamiltonian and the initial state in the computational basis are real. Then
$\mathcal{P}_{\text{odd}}=
        \{\bm{\sigma}\in \mathcal{P} \mid \text{$\text{num}_Y(\bm{\sigma})$ is odd}\}$ is equivalent to the original Pauli pool $\mathcal{P}$
        \begin{align}
            \mathcal{P}\sim\mathcal{P}_{\text{odd}}\,.
        \end{align}
\end{proposition}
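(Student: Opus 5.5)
The plan is to apply Lemma~\ref{lem:trivial} with $\mathcal{P}_0 := \mathcal{P}\setminus\mathcal{P}_{\text{odd}}$, the Pauli strings with an even number of $Y$'s. Reality will be propagated inductively along the algorithm.

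First I will record the elementary fact that, in the computational basis, $X$, $Z$ and $I$ have real matrices while $Y$ is purely imaginary. Hence a Pauli string $\bm{\sigma}$ satisfies $\bm{\sigma}^* = (-1)^{\text{num}_Y(\bm{\sigma})}\bm{\sigma}$ entrywise. So $\bm{\sigma}$ is a real symmetric matrix if $\text{num}_Y$ is even, and $i$ times a real antisymmetric matrix if it is odd, since Pauli strings are Hermitian.

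Next, assume the current state $\ket{\tilde\Psi^{(k,m-1)}}$ and the term $h^{(m)}$ are real. For real $\ket{\psi}$ and a real matrix $M$, the quantity $\braket{\psi|M|\psi}$ is real, and for $M$ purely imaginary it is purely imaginary. I check the two conditions of Lemma~\ref{lem:trivial}.
\begin{itemize}
\item For $\bm{\sigma}\in\mathcal{P}_0$ and $\bm{\sigma}'\in\mathcal{P}_{\text{odd}}$, the product $\bm{\sigma}\bm{\sigma}'$ is $i$ times a real matrix. Therefore $\langle\bm{\sigma}\bm{\sigma}'\rangle$ is imaginary and $S_{\bm{\sigma},\bm{\sigma}'} = 2\real\langle\bm{\sigma}\bm{\sigma}'\rangle = 0$. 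The same holds for the transposed block, so $\bm{S}$ is block diagonal.
\item For $\bm{\sigma}\in\mathcal{P}_0$, the product $\bm{\sigma}h^{(m)}$ is real, so $\imag\langle\bm{\sigma}h^{(m)}\rangle = 0$ and $b_{\bm{\sigma}} = 0$. The normalization factor is real and causes no trouble.
\end{itemize}
Lemma~\ref{lem:trivial} then gives $\mathcal{P}\sim\mathcal{P}_{\text{odd}}$ at step $(k,m)$, with the solution $\tilde{\bm{a}}$ supported on $\mathcal{P}_{\text{odd}}$.

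The remaining step, which I expect to be the only real subtlety, is to show that the hypothesis "real state" survives each step, so the argument iterates. The statement only assumes the initial state is real. With $\tilde{\bm{a}}$ real and supported on $\mathcal{P}_{\text{odd}}$, the matrix $-i\Delta\tau \tilde A$ is real, since $-i$ times $i\cdot$(real) is real. So $e^{-i\Delta\tau\tilde A}$ is a real orthogonal matrix and maps real states to real states. Reality of the coefficients $\tilde{\bm{a}}$ holds because $\bm{S}$ and $\bm{b}$ are real by definition.

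I also need to handle non-uniqueness when $\bm{S}$ is singular. I will fix the convention (as in Lemma~\ref{lem:trivial}) that the reduced solution padded with zeros is the one chosen, or equivalently use the minimum-norm/pseudoinverse solution, which respects the block structure. By induction over $(k,m)$, every intermediate state is real, and at each step $\ket{\tilde\Psi^{(k,m)}_{\mathcal{P}}} = \ket{\tilde\Psi^{(k,m)}_{\mathcal{P}_{\text{odd}}}}$. This gives $\mathcal{P}\sim\mathcal{P}_{\text{odd}}$.
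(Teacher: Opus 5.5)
Your proposal is correct and follows the paper's proof: you apply Lemma~\ref{lem:trivial} with $\mathcal{P}_0=\mathcal{P}\setminus\mathcal{P}_{\text{odd}}$, get the vanishing of the mixed $S$ entries and of $b$ on even-$Y$ strings from the real/purely-imaginary structure, and then induct over the time steps. You also spell out what the paper compresses into ``follows by induction'': $e^{-i\Delta\tau\tilde A}$ is real orthogonal, so reality of the state propagates. You rely on reality of each term $h^{(m)}$, as in Prop.~\ref{prop:reduction-by-reality}, and that is the hypothesis the argument actually needs.
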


\begin{proof}
    From the previous lemma \ref{lem:trivial}, it is sufficient to prove that the condition \ref{eq:reduction-condition} holds for $\mathcal{P}_{\text{even}}$. It can be shown that for $\sigma_I\in\mathcal{P}_{\text{odd}}$ and $\sigma_J\in\mathcal{P}_{\text{even}}$, all elements of the matrix representation of $\sigma_I\sigma_J$ in the computational basis are purely imaginary. Thus, one can obtain
    \begin{align}
        S_{IJ}&=2\real{\bra{\psi}\sigma_I\sigma_J\ket{\psi}}=0\,.
    \end{align}
    The same argument applies to $b_I$, and hence one can use Lemma~\ref{lem:trivial}. The remaining part of the proof follows by induction over the imaginary time steps.
\end{proof}

\begin{proposition}
    \label{symmetry-red}
    Let us denote the model Hamiltonian as $H$ and a symmetry group as $\mathcal{S}$.
We then assume the following conditions.
\begin{itemize}
    \item (Symmetry of initial state) The initial state satisfies $s\ket{\psi_{\text{init}}}=\ket{\psi_{\text{init}}}$ for all~$s\in \mathcal{S}$.
    \item (Symmetry of Hamiltonian) The Hamiltonian satisfies $[H,s]=0$ for all $s \in \mathcal{S}$
    \item (Symmetric decomposition) Each term in the Suzuki-Trotter decomposition satisfies $[h^{(m)},s]=0$ for all $m=1,\ldots,M$ and $s\in \mathcal{S}$.
\end{itemize}
With these conditions, $\mathcal{P}_\mathcal{S} = \{\bm{\sigma}\in \mathcal{P} \mid \bm{\sigma} = s \bm{\sigma} s^\dagger, \forall s\in \mathcal{S}\}$ is equivalent to the original Pauli pool $\mathcal{P}$
\begin{align}
    \mathcal{P}\sim\mathcal{P}_{\mathcal{S}}\,.
\end{align}
\end{proposition}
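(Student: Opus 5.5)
Following the proof of Prop.~\ref{parity-red}, we reduce the statement to Lemma~\ref{lem:trivial}. We take $\mathcal{P}_0 := \mathcal{P}\setminus\mathcal{P}_\mathcal{S}$ and verify the two vanishing conditions~\eqref{eq:reduction-condition} at every step. We then conclude by induction over the time steps $(k,m)$.

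The inductive hypothesis is that the current state is symmetric: $s\ket{\tilde\Psi^{(k,m-1)}_{\mathcal{P}_\mathcal{S}}}=\ket{\tilde\Psi^{(k,m-1)}_{\mathcal{P}_\mathcal{S}}}$ for all $s\in\mathcal{S}$. The base case is the assumed symmetry of the initial state. For the inductive step, every $\bm\sigma\in\mathcal{P}_\mathcal{S}$ commutes with every $s$, so $\tilde A_{\mathcal{P}_\mathcal{S}}$ commutes with $\mathcal{S}$. Hence $e^{-i\Delta\tau\tilde A}$ preserves symmetric states, and the next state is again symmetric. The symmetric-decomposition assumption is what makes $\langle h^{(m)}\rangle$ and $b$ well behaved here, and it is used in the next step.

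The key step is to show that the off-block matrix elements vanish. We use that $\mathcal{S}$ consists of Pauli strings, as in the quotient construction and as in our application, where $\mathcal{G}$ is generated by $X$-strings. For Pauli strings, $s\bm\sigma s^\dagger=\pm\bm\sigma$. So if $\bm\sigma_J\notin\mathcal{P}_\mathcal{S}$, there is some $s$ with $s\bm\sigma_J s^\dagger=-\bm\sigma_J$. For $\bm\sigma_I\in\mathcal{P}_\mathcal{S}$ and a symmetric state $\ket\psi$, we then have
\begin{align}
\langle\bm\sigma_I\bm\sigma_J\rangle=\bra\psi s^\dagger(\bm\sigma_I\bm\sigma_J)s\ket\psi=\bra\psi \bm\sigma_I (s^\dagger\bm\sigma_J s)\ket\psi=-\langle\bm\sigma_I\bm\sigma_J\rangle,
\end{align}
so the expectation value is zero. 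Since $S_{IJ}=S_{JI}$ is real, the same holds with $I\leftrightarrow J$. The same argument with $[h^{(m)},s]=0$ gives $\langle\bm\sigma_J h^{(m)}\rangle=-\langle\bm\sigma_J h^{(m)}\rangle=0$, so $b_J=0$ for $\bm\sigma_J\in\mathcal{P}_0$. Lemma~\ref{lem:trivial} then gives that the solution for $\mathcal{P}$ is obtained from that for $\mathcal{P}_\mathcal{S}$, padded with zeros, so the two states coincide at this step.

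The main obstacle is the role-swap in Lemma~\ref{lem:trivial}. That lemma is phrased with $\mathcal{P}_0$ as the block to drop, and as written its condition puts $b_I=0$ on the dropped block. We must apply it with the dropped block equal to $\mathcal{P}\setminus\mathcal{P}_\mathcal{S}$, which is exactly what we verified above. A second subtlety is that for a general, non-Pauli group $s\bm\sigma s^\dagger$ need not be $\pm\bm\sigma$. In that case we would replace the argument by averaging over $\mathcal{S}$: we project $A$ onto its $\mathcal{S}$-invariant part and use that the quadratic objective is $\mathcal{S}$-invariant, so the invariant projection of a minimizer is again a minimizer. We expect that the Pauli-group case suffices for the paper.
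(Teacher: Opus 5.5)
Your proposal is correct and is essentially the paper's proof: Lemma~\ref{lem:trivial} with $\mathcal{P}_0=\mathcal{P}\setminus\mathcal{P}_\mathcal{S}$, then inserting an $s$ that anticommutes with the non-invariant Pauli string to kill the off-block $S_{IJ}$ and the corresponding $b_I$, then induction over steps. You usefully make explicit the Pauli-string assumption on $\mathcal{S}$ and the symmetry-preservation step under $e^{-i\Delta\tau\tilde A_{\mathcal{P}_\mathcal{S}}}$, both of which the paper leaves implicit. Drop the averaging fallback for non-Pauli $\mathcal{S}$, however: with $\mathcal{S}$ generated by a two-qubit swap, $\ket{\psi}=\ket{+}\otimes\ket{+}$ and $h=Z_1+Z_2$, the full pool's update $A=Y_1+Y_2$ is swap-invariant and reproduces the target exactly, yet no real combination of the swap-invariant strings $I$, $X\otimes X$, $Y\otimes Y$, $Z\otimes Z$ does, so the proposition genuinely requires $\mathcal{S}$ to consist of Pauli strings.
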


\begin{proof}
        From Lemma~\ref{lem:trivial}, it is sufficient to prove that the condition \ref{eq:reduction-condition} holds for $\mathcal{P}\backslash\mathcal{P}_\mathcal{S}$.
        One can show that $\sigma\in\mathcal{P}\backslash\mathcal{P}_\mathcal{S}$ anticommutes with some $s\in\mathcal{S}$. Then, for $\sigma_I \in\mathcal{P}\backslash\mathcal{P}_\mathcal{S}$ and $\sigma_J\in\mathcal{P}_\mathcal{S}$
    \begin{align}
        S_{IJ}&=2\real{\bra{\psi}\sigma_I\sigma_J\ket{\psi}}\\
        &=2\real{\bra{\psi}\sigma_I\sigma_Js\ket{\psi}}\\
        &=2\real{\bra{\psi}(-s)\sigma_I\sigma_J\ket{\psi}}\\
        &= -S_{IJ}\,.
    \end{align}
    Thus, $S_{IJ}=0$. The same argument holds for $b_I$ and hence one can use lemma \ref{lem:trivial}. The remaining part of the proof is just using induction for each steps.
\end{proof}

\begin{proposition}
    Suppose that a Pauli pool $\mathcal{P}$ reduces to $\mathcal{P}_A$ and $\mathcal{P}_B$ according to Lemma.~\ref{lem:trivial}. Then the intersection of $\mathcal{P}_A$ and $\mathcal{P}_B$ is equivalent to the original Pauli pool $\mathcal{P}$
    \begin{align}
        \mathcal{P}\sim\mathcal{P}_A\cap\mathcal{P}_B
    \end{align}
\end{proposition}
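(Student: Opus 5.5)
We will argue through Lemma~\ref{lem:trivial}. Write $\mathcal{P}_{A}=\mathcal{P}\setminus\mathcal{P}_{0}^{A}$ and $\mathcal{P}_{B}=\mathcal{P}\setminus\mathcal{P}_{0}^{B}$, where $\mathcal{P}_{0}^{A}$ and $\mathcal{P}_{0}^{B}$ each satisfy condition~\eqref{eq:reduction-condition} with respect to $\mathcal{P}$. That is, $S_{IJ}=0$ whenever $\sigma_I\in\mathcal{P}_0^{A}$ and $\sigma_J\notin\mathcal{P}_0^{A}$, and $b_I=0$ for $\sigma_I\in\mathcal{P}_0^{A}$; the same holds for $B$. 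Since
\[
\mathcal{P}\setminus(\mathcal{P}_A\cap\mathcal{P}_B)=\mathcal{P}_0^{A}\cup\mathcal{P}_0^{B},
\]
it suffices to show that $\mathcal{P}_0:=\mathcal{P}_0^{A}\cup\mathcal{P}_0^{B}$ itself satisfies~\eqref{eq:reduction-condition}. Lemma~\ref{lem:trivial} then gives $\mathcal{P}\sim\mathcal{P}_A\cap\mathcal{P}_B$ directly, with no need to pass through the intermediate pools. This matters, because the footnote warns that $\sim$ alone is not closed under intersections.

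\textbf{Vanishing of $b$.} If $\sigma_I\in\mathcal{P}_0$, then $\sigma_I$ lies in $\mathcal{P}_0^{A}$ or in $\mathcal{P}_0^{B}$, so $b_I=0$ by the corresponding hypothesis.

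\textbf{Vanishing of the off-diagonal block of $S$.} Take $\sigma_I\in\mathcal{P}_0$ and $\sigma_J\in\mathcal{P}\setminus\mathcal{P}_0=\mathcal{P}_A\cap\mathcal{P}_B$. If $\sigma_I\in\mathcal{P}_0^{A}$, then $\sigma_J\notin\mathcal{P}_0^{A}$ because $\sigma_J\in\mathcal{P}_A$, so $S_{IJ}=0$ by hypothesis $A$. The case $\sigma_I\in\mathcal{P}_0^{B}$ is identical using hypothesis $B$. So the block of $S$ coupling $\mathcal{P}_0$ to its complement vanishes. Symmetry of $S$, since $S_{IJ}=2\real\langle\sigma_I\sigma_J\rangle=2\real\langle\sigma_J\sigma_I\rangle$, gives the transposed block as well, which is what the block-diagonal form in the proof of Lemma~\ref{lem:trivial} uses.

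\textbf{Main obstacle: extending over all time steps.} The hypotheses are stated for the states produced by the full pool $\mathcal{P}$. Applying the lemma step by step, we need the conditions to hold for the state $\ket{\tilde\Psi^{(k,m-1)}}$ reached along the reduced evolution. We handle this by induction on $(k,m)$. At the first step the states coincide. If the reduced and full evolutions agree up to step $(k,m-1)$, then the conditions for $\mathcal{P}_0^{A}$ and $\mathcal{P}_0^{B}$ hold at that common state. The argument above then shows the step-$(k,m)$ outputs agree as well.

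In the concrete applications, namely reality versus symmetry and the quotient, the conditions are verified for every state. For reality this holds whenever the state has real amplitudes; for symmetry it holds whenever the state is $\mathcal{S}$-invariant. Both properties are preserved by evolution under $e^{-i\Delta\tau A}$ with $A$ built from the reduced pool. We therefore assume the hypothesis ``reduces according to Lemma~\ref{lem:trivial}'' is meant in this stepwise-invariant sense.
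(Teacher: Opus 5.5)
Your proof is correct and follows the paper's argument: you show that the union $(\mathcal{P}\setminus\mathcal{P}_A)\cup(\mathcal{P}\setminus\mathcal{P}_B)$ of the removed sets satisfies the decoupling condition~\eqref{eq:reduction-condition}, then apply Lemma~\ref{lem:trivial} with De Morgan's law, and you also spell out the case check the paper leaves as ``one can show'' and make the step-by-step induction explicit. One small caveat on your closing remark: the quotient reduction of Prop.~\ref{prop:quotient} is not of the decoupling type in Lemma~\ref{lem:trivial}, since two elements of the same class have $S_{kl}=\pm 2\neq 0$, so it should not be listed among the cases where that condition holds; this does not affect your proof of the present statement.
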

\begin{proof}
    By assumption, there are subsets $\overline{\mathcal{P}_{A}}=\mathcal{P}\backslash\mathcal{P}_{A}$ and $\overline{\mathcal{P}_{B}}=\mathcal{P}\backslash\mathcal{P}_{B}$ for the case of $\mathcal{P}_{A}$ and $\mathcal{P}_{B}$, respectively, that satisfy the reduction condition~\ref{eq:reduction-condition} replacing $\mathcal{P}_0$ with $\overline{\mathcal{P}_{A}}$, $\overline{\mathcal{P}_{B}}$. 
    Then one can show that their union $\overline{\mathcal{P}_{A}}\cup\overline{\mathcal{P}_{B}}$ also satisfies the reduction condition~\ref{eq:reduction-condition}. 
    Thus, the complement of the union is equivalent to the original Pauli pool $\mathcal{P}$, i.e.,
    \begin{align}
        \mathcal{P}\backslash(\overline{\mathcal{P}_{A}}\cup\overline{\mathcal{P}_{B}})=\mathcal{P}_A\cap\mathcal{P}_B
    \end{align}
    where the equality follows from De Morgan's laws.
\end{proof}

\begin{corollary}
    Suppose that the conditions in Prop.~\ref{parity-red} and Prop.~\ref{symmetry-red} are satisfied. Then the intersection of $\mathcal{P}_\text{odd}$ and $\mathcal{P}_\mathcal{S}$ is equivalent to the original Pauli pool $\mathcal{P}$
    \begin{align}
        \mathcal{P}\sim\mathcal{P}_{\text{odd},\mathcal{S}}\,,
    \end{align}
    where we use $\mathcal{P}_{\text{odd},\mathcal{S}}=\mathcal{P}_{\text{odd}}\cap\mathcal{P}_{\mathcal{S}}$.
\end{corollary}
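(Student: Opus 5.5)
The corollary follows by combining the two reduction propositions through the intersection proposition stated just before it. Concretely, I would set $\mathcal{P}_A=\mathcal{P}_{\text{odd}}$ and $\mathcal{P}_B=\mathcal{P}_{\mathcal{S}}$. The complements are $\overline{\mathcal{P}_A}=\mathcal{P}_{\text{even}}$ and $\overline{\mathcal{P}_B}=\mathcal{P}\setminus\mathcal{P}_{\mathcal{S}}$. Under the respective hypotheses, Prop.~\ref{parity-red} and Prop.~\ref{symmetry-red} showed that each complement satisfies the block condition~\eqref{eq:reduction-condition} of Lemma~\ref{lem:trivial}. That is, $S_{IJ}=0$ whenever $\sigma_I$ lies in the complement and $\sigma_J$ lies outside it, and $b_I=0$ for every $\sigma_I$ in the complement.

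The key step, which the intersection proposition only asserts, is to check that the union $\mathcal{P}_0:=\mathcal{P}_{\text{even}}\cup(\mathcal{P}\setminus\mathcal{P}_{\mathcal{S}})$ again satisfies~\eqref{eq:reduction-condition}.
\begin{itemize}
\item For the vector $\bm b$: if $\sigma_I\in\mathcal{P}_0$, then $\sigma_I$ lies in at least one of the two complements, so $b_I=0$ by the corresponding proposition.
\item For the matrix $\bm S$: take $\sigma_I\in\mathcal{P}_0$ and $\sigma_J\in\mathcal{P}\setminus\mathcal{P}_0=\mathcal{P}_{\text{odd}}\cap\mathcal{P}_{\mathcal{S}}$. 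If $\sigma_I\in\mathcal{P}_{\text{even}}$, then $S_{IJ}=0$ because $\sigma_J\in\mathcal{P}_{\text{odd}}$ and $\bm S$ is symmetric. If instead $\sigma_I\notin\mathcal{P}_{\mathcal{S}}$, then $S_{IJ}=0$ because $\sigma_J\in\mathcal{P}_{\mathcal{S}}$.
\end{itemize}
In each case the vanishing uses only membership of $\sigma_J$ in the larger complement-of-complement, so no cross-condition is needed. Lemma~\ref{lem:trivial} then gives $\mathcal{P}\sim\mathcal{P}\setminus\mathcal{P}_0$, and De Morgan's law identifies $\mathcal{P}\setminus\mathcal{P}_0$ with $\mathcal{P}_{\text{odd}}\cap\mathcal{P}_{\mathcal{S}}$.

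The main obstacle is that the hypotheses of both propositions must hold at every QITE step, not only for the initial state. Reality of $\ket{\tilde\Psi^{(k,m-1)}}$ and its $\mathcal{S}$-invariance both have to be propagated, so I would run a single induction over $(k,m)$. Assume the current state is real and $\mathcal{S}$-invariant. The reduced solution $\tilde{\bm a}$ is real, which holds because $\bm S$ and $\bm b$ are real by definition. The generator $\tilde A$ is then a real combination of Pauli strings with an odd number of $Y$s, so $i\tilde A$ is a real matrix and $e^{-i\Delta\tau\tilde A}$ is real orthogonal; hence the next state is real. Each such string also commutes with every $s\in\mathcal{S}$, so $e^{-i\Delta\tau\tilde A}$ commutes with $s$ and preserves invariance. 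The base case is the assumption on $\ket{\psi_{\text{init}}}$, and the $h^{(m)}$ are real and $\mathcal{S}$-commuting by hypothesis. This closes the induction and proves the corollary.
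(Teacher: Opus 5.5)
Your proposal is correct and follows the paper's route: the corollary is the intersection proposition applied with $\mathcal{P}_A=\mathcal{P}_{\text{odd}}$ and $\mathcal{P}_B=\mathcal{P}_{\mathcal{S}}$. The union of the two complements satisfies condition~\eqref{eq:reduction-condition}, Lemma~\ref{lem:trivial} applies, and De Morgan's law identifies the reduced pool. Your explicit check of the union condition (using the symmetry of $\bm S$ for the parity block) and your single induction propagating both reality and $\mathcal{S}$-invariance simply fill in the steps the paper leaves as ``one can show'' and ``by induction.''
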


\begin{remark}
The following statement does not hold in general:
    \begin{align}
        \mathcal{P}\sim\mathcal{P}^\prime\quad \Rightarrow        \mathcal{P}\sim\mathcal{P}\cap\mathcal{P}^\prime\,.
    \end{align}
    Choosing different representatives in Prop.~\ref{prop:quotient} gives a counterexample.
\end{remark}

\begin{proposition}
\label{prop:quotient}
    The original Pauli pool $\mathcal{P}$ can be reduced to the intersection of the odd Pauli pool $\mathcal{P}_\text{odd}$ and the quotient Pauli pool of $S$-action $\mathcal{P}_{S}/S$
\begin{align}
\mathcal{P}_{\mathcal{S}}\sim\mathcal{P}_{\mathcal{S}}/\mathcal{S}\,.
\end{align}

\end{proposition}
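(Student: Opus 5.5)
Under the assumptions of Prop.~\ref{symmetry-red}, I would show that each equivalence class $[\bm{\sigma}]_\mathcal{S}$ acts on the current state as a single operator up to signs. The evolution generated by the whole class can then be reproduced by its chosen representative alone. The argument is by induction over the steps $(k,m)$. The inductive invariant is that $\ket{\tilde\Psi^{(k,m-1)}}$ is $\mathcal{S}$-invariant, i.e.\ $s\ket{\tilde\Psi}=\ket{\tilde\Psi}$ for all $s\in\mathcal{S}$.

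This invariant holds initially by the symmetry of $\ket{\psi_{\text{init}}}$. It is preserved for two reasons. First, every $\bm{\sigma}\in\mathcal{P}_\mathcal{S}$ commutes with every $s$, by the definition of $\mathcal{P}_\mathcal{S}$. Second, $\mathcal{S}$ consists of Pauli strings, so $s^2=\pm I$; for an invariant state to be nonzero we need $s^2=I$. Hence any $e^{-i\Delta\tau A}$ with $A$ built from $\mathcal{P}_\mathcal{S}$ commutes with $s$ and maps invariant states to invariant states.

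Given the invariant, take $\bm{\sigma}'=\pm\bm{\sigma}s$ in the same class. Then $\bm{\sigma}'\ket{\tilde\Psi}=\pm\bm{\sigma}\ket{\tilde\Psi}$. Write $\epsilon_{\bm{\sigma}'}=\pm1$ for this sign relative to the fixed representative $\bm{\sigma}_r$, and let $v_r:=\bm{\sigma}_r\ket{\tilde\Psi}$. I would derive two consequences.

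\begin{enumerate}
\item The rows and columns of $\bm S$ belonging to one class are proportional. Indeed $S_{\bm{\sigma}'\bm{\tau}'}=2\real\langle \bm{\sigma}'\bm{\tau}'\rangle=\epsilon_{\bm{\sigma}'}\epsilon_{\bm{\tau}'}S_{\bm{\sigma}_r\bm{\tau}_r}$. This uses $\bm{\sigma}'^\dagger=\bm{\sigma}'$ together with $\bra{\tilde\Psi}\bm{\sigma}'=(\bm{\sigma}'\ket{\tilde\Psi})^\dagger$. In the same way $b_{\bm{\sigma}'}=\epsilon_{\bm{\sigma}'}b_{\bm{\sigma}_r}$; here the extra factor $h^{(m)}$ is harmless because $h^{(m)}$ commutes with $s$, so $h^{(m)}\ket{\tilde\Psi}$ is still invariant.
\item The full system reduces to a block form. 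Setting $\hat a_r:=\sum_{\bm{\sigma}'\in[\bm{\sigma}_r]}\epsilon_{\bm{\sigma}'}a_{\bm{\sigma}'}$, the system $\bm S\bm a=\bm b$ over $\mathcal{P}_\mathcal{S}$ becomes $\epsilon_{\bm{\sigma}'}\bigl(\hat{\bm S}\hat{\bm a}-\hat{\bm b}\bigr)_r=0$. Here $\hat{\bm S},\hat{\bm b}$ are exactly the quantities of Eq.~\eqref{eq:linear-eq-QITE} for the quotient pool $\mathcal{P}_\mathcal{S}/\mathcal{S}$. So $\bm a$ solves the full system iff $\hat{\bm a}$ solves the quotient system.
\end{enumerate}

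From these it follows that every solution of the full problem gives a solution $\hat{\bm a}$ of the quotient problem. Conversely, any quotient solution lifts to the full problem by placing it on the representatives and putting zeros elsewhere. Moreover, $A_{\mathcal{P}_\mathcal{S}}(\bm a)\ket{\tilde\Psi}=\sum_r\hat a_r v_r=A_{\mathcal{P}_\mathcal{S}/\mathcal{S}}(\hat{\bm a})\ket{\tilde\Psi}$.

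The main obstacle is upgrading this equality of first-order actions to equality of the exponentials $e^{-i\Delta\tau A}\ket{\tilde\Psi}$. The difference $A_{\mathcal{P}_\mathcal{S}}-A_{\mathcal{P}_\mathcal{S}/\mathcal{S}}$ equals $\sum a_{\bm{\sigma}'}(\bm{\sigma}'-\epsilon_{\bm{\sigma}'}\bm{\sigma}_r)$. Each term is $\bm{\sigma}_r(\pm s-\epsilon)$ and annihilates every $\mathcal{S}$-invariant vector. Both generators commute with all of $\mathcal{S}$, so they preserve the invariant subspace. There they agree, and hence so do all their powers and exponentials on $\ket{\tilde\Psi}$. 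This gives $\ket{\tilde\Psi_{\mathcal{P}_\mathcal{S}}}=\ket{\tilde\Psi_{\mathcal{P}_\mathcal{S}/\mathcal{S}}}$ and closes the induction.

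A caveat is that the solution is non-unique when $\bm S$ is singular, and the lift only reproduces the zero-padded solution. I would argue that every solution yields the same state at the stated order, since $A\ket{\tilde\Psi}$ is fixed by $\hat{\bm a}$ as computed above. Intersecting with $\mathcal{P}_\text{odd}$ then follows as in the intersection proposition, choosing representatives with odd $\text{num}_Y$. This is possible because $s$ is a product of $X$'s, so it preserves $\text{num}_Y$ up to the class.
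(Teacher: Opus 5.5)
Your proposal is correct and follows essentially the paper's argument. $\mathcal{S}$-invariance of $\ket{\tilde\Psi}$ makes the rows and columns of $\bm S$ and the entries of $\bm b$ within one class agree up to the signs $\epsilon_{\bm{\sigma}'}$, so each class collapses onto its representative; the paper does this one non-representative at a time via $\tilde{\bm a}=(\bm a',\,a_k+\eta_{kl}a_l,\,0)$, while you collapse a whole class at once through $\hat a_r$. Your invariant-subspace argument (both generators commute with $\mathcal{S}$ and coincide on $\mathcal{S}$-invariant vectors, so they give the same $e^{-i\Delta\tau A}\ket{\tilde\Psi}$ and not merely the same $A\ket{\tilde\Psi}$) makes explicit a step the paper only asserts with ``which produces the same result''. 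For the $\mathcal{P}_{\text{odd}}$ intersection, rerun the same class collapse on $\mathcal{P}_{\text{odd}}\cap\mathcal{P}_\mathcal{S}$ instead of citing the intersection proposition, because the quotient reduction does not meet its zero-block hypothesis: $S_{\bm{\sigma}'\bm{\sigma}_r}=2\epsilon_{\bm{\sigma}'}\neq0$.
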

\begin{proof}
Suppose that $\bm\sigma_k$ and $\bm\sigma_l$ belong to the same equivalence class. Then there exists some $s\in \mathcal{S}$ such that
\begin{align}
    \bm\sigma_k=\eta_{kl}\bm\sigma_l\cdot s,
\end{align}
where $\eta_{kl}=\pm1$. Thus,
\begin{align}
    \bra{\psi}\bm\sigma_k^\dagger h^{(m)}\ket{\psi} = \bra{\psi}s^\dagger\eta_{kl}\bm\sigma_l^\dagger h^{(m)}\ket{\psi}=\eta_{kl}\bra{\psi}\bm\sigma_l^\dagger h^{(m)}\ket{\psi},
\end{align}
and therefore
\begin{align}
b_{\bm\sigma_k}^{(m)}=\eta_{kl}b_{\bm\sigma_l}^{(m)}.
\end{align}
A similar argument can be applied to the matrix $S$ and we have
\begin{align}
    \bm{S}&=
    \begin{pmatrix}
        \bm S^\prime & \bm x_k & \eta_{kl} \bm x_k\\
        \bm x_k^\top & 2 & \eta_{kl}2\\
        \eta_{kl} \bm x_k^\top & \eta_{kl}2 & 2 \\
    \end{pmatrix}\,,\\
    \bm b&=
    \begin{pmatrix}
        \bm b^\prime\\
        b_k\\
        \eta_{kl}b_k\\
    \end{pmatrix}\,.
\end{align}
Now, let $\bm a$ be a solution to the linear equation $\bm{Sx}=\bm b$. Writing $\bm a^\top=\begin{pmatrix}
    \bm a^{\prime\top} & a_k & a_l
\end{pmatrix}$, one finds that 
\begin{align}
    \tilde{\bm a}=\begin{pmatrix}
        \bm a^\prime\\
        a_k+\eta_{kl}a_l\\
        0
    \end{pmatrix}
\end{align}
is also a solution to the same linear equation. Hence, instead of using the Pauli pool $\mathcal{P}_\mathcal{S}$, one can use $\mathcal{P}_\mathcal{S}\backslash\{\sigma_l\}$ with
\begin{align}
    \tilde{\bm{S}}&=
    \begin{pmatrix}
        \bm S^\prime & \bm x_k\\
        \bm x_k^\top & 2 \\
    \end{pmatrix}\,,\\
    \tilde{\bm b}&=
    \begin{pmatrix}
        \bm b^\prime\\
        b_k\\
    \end{pmatrix}\,,\\
    \tilde{\bm a}&=
    \begin{pmatrix}
        \bm a^\prime\\
        a_k+\eta_{kl}a_l\\
    \end{pmatrix}\,,
\end{align}
which produces the same result. Therefore $\mathcal{P}_\mathcal{S}\sim\mathcal{P}_\mathcal{S}\backslash\{\sigma_l\}$. Repeating this argument for all equivalent classes and for all non-representative elements in each class, we have 
\begin{align}
    \mathcal{P}_\mathcal{S}\sim\mathcal{P}_\mathcal{S}/\mathcal{S}.
\end{align}
\end{proof}

\begin{corollary}
    Suppose that those conditions in the previous propositions are satisfied. Then the intersection of $\mathcal{P}_\text{odd}$ and $\mathcal{P}_\mathcal{S}/\mathcal{S}$ is equivalent to the original Pauli pool $\mathcal{P}$
    \begin{align}
        \mathcal{P}\sim\mathcal{P}_{\text{odd},\mathcal{S}}/\mathcal{S}\,.
    \end{align}
\end{corollary}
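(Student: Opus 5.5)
The plan is to combine the two reductions already available: the parity reduction $\mathcal{P}\sim\mathcal{P}_{\text{odd},\mathcal{S}}$ from the previous corollary, and the quotient mechanism of Prop.~\ref{prop:quotient}, which is applied \emph{inside} $\mathcal{P}_{\text{odd},\mathcal{S}}$ rather than inside $\mathcal{P}_\mathcal{S}$. Since $\sim$ is an equality of the resulting states, it is transitive. It therefore suffices to show
\begin{equation}
\mathcal{P}_{\text{odd},\mathcal{S}}\sim\mathcal{P}_{\text{odd},\mathcal{S}}/\mathcal{S}\,,
\end{equation}
where the representatives are chosen inside $\mathcal{P}_{\text{odd},\mathcal{S}}$.

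The first step is to check that the equivalence relation $\overset{\mathcal{S}}{\sim}$ restricts to $\mathcal{P}_{\text{odd},\mathcal{S}}$. The intersection with $\mathcal{P}_{\text{odd}}$ should then be understood as the set of classes whose representatives lie in $\mathcal{P}_{\text{odd}}$. This is the step I expect to be the main obstacle, and the preceding Remark warns that naive intersections can fail. The key fact needed is that every $s\in\mathcal{S}$ has real matrix elements in the computational basis. This holds for the Gauss-law group, which is generated by products of $X$'s, and it is implicitly required for $s\ket{\psi_{\text{init}}}=\ket{\psi_{\text{init}}}$ to be compatible with the reality assumption.

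If $\bm\sigma'=\pm\bm\sigma s$ with $s$ real and Hermitian, then $\text{num}_Y(\bm\sigma')\equiv\text{num}_Y(\bm\sigma)\pmod 2$. This is because the parity of $\text{num}_Y$ is exactly whether the Pauli string is real or purely imaginary, and multiplying by a real sign-symmetric Pauli $s$ does not change this. Consequently each class in $\mathcal{P}_\mathcal{S}/\mathcal{S}$ lies entirely in $\mathcal{P}_{\text{odd}}$ or entirely in $\mathcal{P}_{\text{even}}$. It follows that $\mathcal{P}_{\text{odd}}\cap(\mathcal{P}_\mathcal{S}/\mathcal{S})$ is well defined, independent of the choice of representatives, and equal to $\mathcal{P}_{\text{odd},\mathcal{S}}/\mathcal{S}$.

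The second step reruns the argument of Prop.~\ref{prop:quotient} on the pool $\mathcal{P}_{\text{odd},\mathcal{S}}$. Take $\bm\sigma_k,\bm\sigma_l$ in the same class, so that $\bm\sigma_k=\eta_{kl}\bm\sigma_l s$, and take any $\bm\sigma_J$ in the pool. Using $s\ket{\psi}=\ket{\psi}$ (from the inductive hypothesis that the state stays symmetric) together with $[s,\bm\sigma_J]=0$ and $[s,h^{(m)}]=0$, we obtain $S_{kJ}=\eta_{kl}S_{lJ}$ and $b_k=\eta_{kl}b_l$. Because of this row and column duplication, the vector with $a_k\to a_k+\eta_{kl}a_l$ and $a_l\to0$ is still a solution, and it yields the same operator $A$, since $a_k\bm\sigma_k+a_l\bm\sigma_l$ acting on $\ket\psi$ agrees.

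One subtlety must be handled here. Equality of $A$ on $\ket\psi$ does not by itself give equality of $e^{-i\Delta\tau A}\ket\psi$. I would handle this by noting that $\bm\sigma_l$ and $\eta_{kl}\bm\sigma_k$ agree on the whole $\mathcal{S}$-invariant subspace, which is preserved by every element of the pool. The two generators therefore coincide on that subspace, and so do their exponentials. Removing the non-representatives one at a time then gives the claim. Finally, induction over the steps $(k,m)$ keeps all the hypotheses in force: the state remains real and $\mathcal{S}$-invariant, because the generator has real coefficients times odd (imaginary) Paulis, which makes $e^{-i\Delta\tau A}$ real, and because the generator commutes with $\mathcal{S}$.
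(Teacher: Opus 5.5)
Your proposal is correct and follows the route the paper leaves implicit (it states this corollary without proof): combine $\mathcal{P}\sim\mathcal{P}_{\text{odd},\mathcal{S}}$ with the duplicate-row argument of Prop.~\ref{prop:quotient}, run inside $\mathcal{P}_{\text{odd},\mathcal{S}}$. You also supply two steps the paper skips: every $\mathcal{S}$-class has a fixed $Y$-parity because each $s\in\mathcal{S}$ is real, so the intersection does not depend on the choice of representatives (the issue the Remark flags); and the two generators agree on the $\mathcal{S}$-invariant subspace, so their exponentials agree on $\ket{\psi}$, not just their first-order action.
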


\section{Reduced Pauli pool in $\mathbb{Z}_2$ LGT}
\label{app:pauli-pool-proof-z2lgt}
In this section, we provide the proofs of propositions regarding the cost reduction in $\mathbb{Z}_2$  LGT, which are given in Section~\ref{subsec:qite-and-reduction-z2lgt}.

\begin{proposition}
    
    The number of elements in $\mathcal{P}_{\mathcal{G},\text{odd}}$ is given by
    \begin{align}
    |\mathcal{P}_{\mathcal{G},\text{odd}}|=2^{n_\text{link}(\mathcal{D})-1}\times(2^{n_\text{plaq}(\mathcal{D})}-1).
\end{align}
\end{proposition}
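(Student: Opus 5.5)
We count the reduced pool by organizing it around the support of the non-$X$ part. The natural decomposition is $\bm{\sigma}=\bm{\sigma}_X\otimes\bm{\sigma}_{\bar X}$, where $\bm{\sigma}_X$ is built from $X$ and $I$, and $\bm{\sigma}_{\bar X}$ is built from $Y$ and $Z$. By the characterization derived in Section~\ref{subsec:qite-and-reduction-z2lgt}, $\bm{\sigma}\in\mathcal{P}_{\mathcal{G}}$ if and only if $C:=\text{supp}(\bm{\sigma}_{\bar X})$ is a closed loop configuration in $\mathcal{D}$. Such a $C$ is an edge set with even degree at every vertex, i.e., a $\mathbb{Z}_2$ 1-cycle of the graph formed by the links of $\mathcal{D}$. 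The count then proceeds in three steps.

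\textbf{Step 1: fix $C$ and count Pauli strings.} On each link of $C$ the operator is $Y$ or $Z$, giving $2^{|C|}$ choices. Requiring $\text{num}_Y$ to be odd keeps exactly half of them when $|C|\ge1$, namely $2^{|C|-1}$, and none when $C=\emptyset$, since then $\text{num}_Y=0$. On each link outside $C$ the operator is $I$ or $X$, giving $2^{n_{\text{link}}(\mathcal{D})-|C|}$ choices. For every nonempty cycle $C$ the total is therefore
\[
2^{|C|-1}\cdot 2^{n_{\text{link}}(\mathcal{D})-|C|}=2^{n_{\text{link}}(\mathcal{D})-1},
\]
which does not depend on $C$.

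\textbf{Step 2: count the closed loop configurations.} We need to show that the cycle space of the link graph of $\mathcal{D}$ has exactly $2^{n_{\text{plaq}}(\mathcal{D})}$ elements. The argument is that the plaquette boundaries $\partial p$ span the cycle space and are linearly independent over $\mathbb{Z}_2$. For independence: a nontrivial sum of plaquette boundaries cannot vanish, because some plaquette in the sum has an edge on the outer boundary of the union of the chosen plaquettes, and that edge survives. For spanning: compare dimensions. The cycle space of a connected planar graph has dimension $E-V+1$, and by Euler's formula this equals the number of bounded faces. This step assumes the support $\mathcal{D}$ is a connected union of plaquettes with no holes and no dangling links, as in all the supports of Tab.~\ref{tab:reduced-pauli-pool-z2lgt}; we should state this assumption explicitly.

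\textbf{Step 3: combine.} Summing Step~1 over the $2^{n_{\text{plaq}}(\mathcal{D})}-1$ nonempty cycles gives
\[
|\mathcal{P}_{\mathcal{G},\text{odd}}[\mathcal{D}]|=2^{n_{\text{link}}(\mathcal{D})-1}\bigl(2^{n_{\text{plaq}}(\mathcal{D})}-1\bigr).
\]
As a sanity check, one plaquette has $n_{\text{link}}=4$ and $n_{\text{plaq}}=1$, which gives $8$, matching the table.

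The main obstacle is Step~2. Identifying the closed loop configurations in $\mathcal{D}$ with the span of plaquette boundaries requires the simple-connectivity assumption, together with care about which Gauss operators act on the support. Boundary vertices of $\mathcal{D}$ must also impose the even-degree condition, so that open strings ending on the support boundary are excluded. I would handle this by an Euler-characteristic argument on the planar region $\mathcal{D}$.
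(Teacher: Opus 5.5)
Your proof is correct and follows the same route as the paper's. You partition $\mathcal{D}=\mathcal{D}_X\sqcup\mathcal{D}_{\bar X}$, observe that each admissible closed-loop $\mathcal{D}_{\bar X}$ contributes $2^{w_X}\cdot 2^{w_{\bar X}-1}=2^{n_{\text{link}}(\mathcal{D})-1}$ strings, and multiply by the $2^{n_{\text{plaq}}(\mathcal{D})}-1$ nonempty loop configurations; your Step~2 (independence of plaquette boundaries plus the Euler-formula dimension count) also justifies the loop--plaquette-subset correspondence that the paper only asserts, and correctly makes explicit the no-holes assumption on $\mathcal{D}$ that it needs.
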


\begin{proof}
The Pauli pool $\mathcal{P}_{\mathcal{G},\text{odd}}$ is given by 
\begin{align}
        \mathcal{P}_{\mathcal{G},\text{odd}} &:= \mathcal{P}_{\mathcal{G}} \cap \mathcal{P}_{\text{odd}}
        \\
        &=
        \{\bm{\sigma}\in\mathcal{P}\mid \text{supp}(\bm{\sigma}_{\bar{X}}) \in \Gamma,
        \notag\\ &\qquad
        \land \text{num}_Y(\bm{\sigma}_{\bar{X}}) \equiv 1 \pmod 2\}\,.
        \label{eq:pauli-g-odd-condition}
\end{align}
Pauli strings $\bm{\sigma}$ for a fixed support $\mathcal{D}=\text{supp}(\bm{\sigma})$ satisfying the above condition can be constructed as follows (see Fig.~\ref{fig:assignment} for an example):
\begin{enumerate}
    \item Partition the support as $\mathcal{D}=\mathcal{D}_X \sqcup \mathcal{D}_{\bar{X}}$, where $\mathcal{D}_X := \text{supp}(\bm{\sigma}_X)$ and $\mathcal{D}_{\bar{X}} := \text{supp}(\bm{\sigma}_{\bar{X}})$.  
    \item Assign $I,X$ and $Y,Z$ for each qubit $q\in\mathcal{D}_X$ and $q\in\mathcal{D}_{\bar{X}}$, respectively.
\end{enumerate}
If one chooses the partition in the first step so that it satisfies the first condition in~\eqref{eq:pauli-g-odd-condition}, each choice corresponds to choosing a set of plaquettes, which specifies $\mathcal{D}_{\bar{X}}\in \Gamma$.
The number of such sets is given by
\begin{align}
    2^{n_\text{plaq}(\mathcal{D})}-1\,,
\end{align}
where we exclude the configuration with $|\mathcal{D}_{\bar{X}}|=0$, since this breaks the second condition $\text{num}_Y(\bm{\sigma}_{\bar{X}})$.

\begin{figure}[!t]
\centering
\scalebox{0.8}{
\begin{tikzpicture}

\coordinate (qm00) at (-3,1);
\coordinate (qm01) at (-3,2);
\coordinate (qm10) at (-2,1);
\coordinate (qm11) at (-2,2);
\coordinate (qm20) at (-3,3);
\coordinate (qm21) at (-2,3);
\coordinate (qm02) at (-1,1);
\coordinate (qm12) at (-1,2);
\coordinate (qm22) at (-1,3);

\draw [very thick, color=blue](qm00)  --(qm02)--(qm22) --(qm20)--(qm00);
\draw[very thick, color=blue] (qm01) -- (qm11) --(qm21);
\draw[very thick, color=blue] (qm10) -- (qm11) --(qm12);

\fill (qm00) circle (3pt);
\fill (qm01) circle (3pt);
\fill (qm10) circle (3pt);
\fill (qm11) circle (3pt);
\fill (qm20) circle (3pt);
\fill (qm21) circle (3pt);
\fill (qm02) circle (3pt);
\fill (qm12) circle (3pt);
\fill (qm22) circle (3pt);

\node[] at (-2,0.5){support $\mathcal{D}$};

\coordinate (q00) at (1,1);
\coordinate (q01) at (1,2);
\coordinate (q10) at (2,1);
\coordinate (q11) at (2,2);
\coordinate (q20) at (1,3);
\coordinate (q21) at (2,3);
\coordinate (q02) at (3,1);
\coordinate (q12) at (3,2);
\coordinate (q22) at (3,3);

\draw [very thick, color=blue](q00)  --(q02)--(q22) --(q20)--(q00);
\draw[line width =2pt, color=green,fill=green,fill opacity=0.20] (q01) -- (q11) --(q21)--(q20)--(q01);
\draw[line width =2pt, color=green,fill=green,fill opacity=0.20] (q10) -- (q11) --(q12)--(q02)--(q10);

\draw[line width =2pt, color=magenta] (q10) -- (q00) --(q01);
\draw[line width =2pt, color=magenta] (q12) -- (q22) --(q21);

\fill (q00) circle (3pt);
\fill (q01) circle (3pt);
\fill (q10) circle (3pt);
\fill (q11) circle (3pt);
\fill (q20) circle (3pt);
\fill (q21) circle (3pt);
\fill (q02) circle (3pt);
\fill (q12) circle (3pt);
\fill (q22) circle (3pt);

\draw[ultra thick,->] (-0.5,2)--(0.5,2);
\node[] at (2,0.5){partition $\mathcal{D}=\mathcal{D}_X \sqcup \mathcal{D}_{\bar{X}}$};

\coordinate (q101) at (5,1);
\coordinate (q102) at (5,2);
\coordinate (q103) at (5,3);
\coordinate (q111) at (6,1);
\coordinate (q112) at (6,2);
\coordinate (q113) at (6,3);
\coordinate (q121) at (7,1);
\coordinate (q122) at (7,2);
\coordinate (q123) at (7,3);

\draw[very thick, color =blue] (q101) -- (q103) -- (q123) -- (q121) --(q101);

\draw[very thick, color=blue] (q102) -- (q112) --(q122);
\draw[very thick, color=blue] (q111) -- (q112) --(q113);

\draw[line width =2pt, color=green,fill=green,fill opacity=0.20] (q102) --node[color=black]{$Y$} (q112) --(q113)--(q103)--(q102);
\draw[line width =2pt, color=green,fill=green,fill opacity=0.20] (q112) -- (q111) --(q121)--(q122)--(q112);

\draw[line width =2pt, color=magenta] (q113) -- (q123) --(q122);
\draw[line width =2pt, color=magenta] (q102) -- (q101) --(q111);

\node[] at (5.5,2) {$Y$};
\node[] at (6,2.5) {$Y$};
\node[] at (5.5,3) {$Y$};
\node[] at (5,2.5) {$Z$};

\node[] at (6.5,2) {$Z$};
\node[] at (6,1.5) {$Z$};
\node[] at (6.5,1) {$Y$};
\node[] at (7,1.5) {$Y$};

\node[] at (5,1.5) {$X$};
\node[] at (5.5,1) {$I$};
\node[] at (6.5,3) {$X$};
\node[] at (7,2.5) {$X$};

\fill (q101) circle (3pt);
\fill (q102) circle (3pt);
\fill (q103) circle (3pt);
\fill (q111) circle (3pt);
\fill (q112) circle (3pt);
\fill (q113) circle (3pt);
\fill (q121) circle (3pt);
\fill (q122) circle (3pt);
\fill (q123) circle (3pt);

\draw[ultra thick,->] (3.5,2)--(4.5,2);

\node[align=center] at (6,0.25){Pauli assignment \\ for $\mathcal{D}_{X}$ and $\mathcal{D}_{\bar{X}}$};

\end{tikzpicture}
}
\caption{An example of a partition and Pauli assignment for a four plaquette support~$\mathcal{D}$, shown as blue links. 
Green links denote a specific partition~$\mathcal{D}_{\bar{X}} = \text {supp}(\bm{\sigma}_{\bar X})\in \Gamma$ and green plaquettes denote the corresponding plaquette excitation. Magenta links show the support for the other part, $ \mathcal{D}_{{X}} = \text {supp}(\bm{\sigma}_{ X})=\mathcal{D}\setminus \text {supp}(\bm{\sigma}_{\bar X})$.}
\label{fig:assignment}
\end{figure}

Once $\mathcal{D}_X$ and $\mathcal{D}_{\bar{X}}$ are fixed, we assign $\{I,X\}$ or $\{Y,Z\}$ to each qubit in the supports.
Let us denote the weights of the Pauli strings as
\begin{align}
    w_{X} &:= \text{wt}(\bm{\sigma}_X) =|\mathcal{D}_X|\,,
    \\
    w_{\bar{X}} &:= \text{wt}(\bm{\sigma}_{\bar{X}}) =|\mathcal{D}_{\bar{X}}|\,.
\end{align}
The Pauli string $\bm{\sigma}_X$ can be taken from $\{I,X\}^{\otimes w_X}$. The number of choices is given by $2^{w_X}$.
On the other hand, 
$\bm{\sigma}_{\bar{X}}$ can be fixed with the condition $\text{num}_Y(\bm{\sigma}_{\bar{X}})$ being satisfied.
The number of possible strings is given as~\footnote{
The equality can be shown as follows. First, we have
\begin{align}
    2^n=&(1+1)^n =
    \sum_{k=0}^n \binom{n}{k}\,,
    \\
    0^n=&(1-1)^n= \sum_{k=0}^n\binom{n}{k} (-1)^k\,.
\end{align}
Therefore, we obtain
\begin{align}
    \sum_{\substack{k=1,\\k\equiv 1\:(\text{mod}\:2)}}^{n}
    \binom{n}{k}
    &=\sum_{k=0}^{n}
    \binom{n}{k}\frac{1-(-1)^k}{2}
    \\
    &= \frac{2^n - 0^n}{2} = 2^{n-1}
\end{align}
}
\begin{align}
    \sum_{\substack{k=1,\\k\equiv 1\:(\text{mod}\:2)}}^{w_{\bar{X}}}
    \begin{pmatrix}
        w_{\bar{X}}\\
        k
    \end{pmatrix}
    =2^{w_{\bar{X}}-1}\,.
\end{align}

Thus, for each partition $\mathcal{D}=\mathcal{D}_X\sqcup\mathcal{D}_{\bar{X}}$, the number of possible Pauli assignments is given by
\begin{align}
    2^{w_X} \times 2^{w_{\bar{X}}-1} 
\end{align}
On the other hand, since $\mathcal{D}$ is a disjoint union of $\mathcal{D}_X$ and $\mathcal{D}_{\bar{X}}$, we have
\begin{equation}
    w_X + w_{\bar{X}} = |\mathcal{D}| = n_{\text{link}}(\mathcal{D})\,,
\end{equation}
independent of the way of partitioning.
Thus, the number of possible Pauli strings for each partition is given by
\begin{align}
    2^{n_\text{link}(\mathcal{D})-1}\,.
\end{align}
Therefore, the total number of Pauli strings is obtained by multiplying the number of possible partitions by the above factor as
\begin{equation}
    2^{n_\text{link}(\mathcal{D})-1}\times(2^{n_\text{plaq}(\mathcal{D})}-1)\,.
\end{equation}
\end{proof}

\begin{proposition}
The number of elements in $\mathcal{P}_{\mathcal{G},\text{odd}}/\mathcal{G}$ is given by
\begin{align}
|\mathcal{P}_{\mathcal{G},\text{odd}}/\mathcal{G}|=\frac{2^{n_\text{link}(\mathcal{D})-1}\times(2^{n_\text{plaq}(\mathcal{D})}-1)}{2^{n_{\mathcal{G}}(\mathcal{D})}}\,.
\end{align}
\end{proposition}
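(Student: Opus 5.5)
The plan is to count orbits. Let $\mathcal{G}_{\text{bulk}}[\mathcal{D}]\subset\mathcal{G}[\mathcal{D}]$ be the subgroup generated by the $n_{\mathcal{G}}(\mathcal{D})$ bulk Gauss's law operators. Show that $\mathcal{G}_{\text{bulk}}$ acts by right multiplication (up to sign) on $\mathcal{P}_{\mathcal{G},\text{odd}}[\mathcal{D}]$, and that this action is free. Then every equivalence class $[\bm{\sigma}]_\mathcal{G}$ has exactly $2^{n_{\mathcal{G}}(\mathcal{D})}$ elements. Dividing the count of the preceding proposition by this orbit size gives the formula.

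\textbf{Step 1: only bulk operators matter.} If $g$ is a product containing a boundary Gauss's law operator, I would argue that $\bm{\sigma}g$ acts nontrivially on a link outside $\mathcal{D}$, so it is not in $\mathcal{P}[\mathcal{D}]$. For a single boundary $g_s$ this is immediate, since it has an $X$ on a link not in $\mathcal{D}$. For products, one must check that such outside $X$'s cannot cancel. An outside link of a boundary star is shared with at most one other vertex. If that other vertex is also in the product, its star is also boundary, so an extremal-vertex argument as in Step 3 should still leave some outside link uncancelled. Hence the relation $\overset{\mathcal{G}}{\sim}$ restricted to $\mathcal{P}[\mathcal{D}]$ is generated by $\mathcal{G}_{\text{bulk}}$ alone.

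\textbf{Step 2: the action preserves $\mathcal{P}_{\mathcal{G},\text{odd}}$.} Take $g\in\mathcal{G}_{\text{bulk}}$, an $X$-string supported in $\mathcal{D}$, and use the split $\bm{\sigma}=\bm{\sigma}_X\otimes\bm{\sigma}_{\bar X}$.
\begin{itemize}
\item On $\mathcal{D}_X$, multiplication by $g$ only toggles $I\leftrightarrow X$.
\item On $\mathcal{D}_{\bar X}$ it maps $Y\mapsto iZ$ up to sign and $Z\mapsto iY$ up to sign, so $\text{supp}((\bm{\sigma}g)_{\bar X})=\text{supp}(\bm{\sigma}_{\bar X})\in\Gamma$ is unchanged.
\item The number of swapped sites is $|\text{supp}(g)\cap\mathcal{D}_{\bar X}|$. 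Since $\bm{\sigma}_{\bar X}$ commutes with $g$, this overlap is even.
\end{itemize}
Consequently the phase $i^{|\cdot|}$ is $\pm1$, and $\text{num}_Y$ changes by an even number, so its parity is preserved. Thus $\bm{\sigma}g=\pm\bm{\sigma}'$ with $\bm{\sigma}'\in\mathcal{P}_{\mathcal{G},\text{odd}}$.

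\textbf{Step 3: the action is free, and $|\mathcal{G}_{\text{bulk}}|=2^{n_{\mathcal{G}}}$.} If $\bm{\sigma}g=\pm\bm{\sigma}$, then $g=\pm I$, and since $g$ is a product of $X$'s, $g=I$. So it suffices to show that the bulk generators are independent, i.e. that no nonempty product of bulk stars is the identity. Given a nonempty set $V$ of bulk vertices, pick the lexicographically maximal one, say maximal $n_x$ and then maximal $n_y$. Its link in the $+\bm{e}_x$ direction is contained in no other star of $V$, so that $X$ survives in the product. Hence $|\mathcal{G}_{\text{bulk}}|=2^{n_{\mathcal{G}}(\mathcal{D})}$ and every orbit has that size.

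Dividing $|\mathcal{P}_{\mathcal{G},\text{odd}}|=2^{n_{\text{link}}(\mathcal{D})-1}(2^{n_{\text{plaq}}(\mathcal{D})}-1)$ by $2^{n_{\mathcal{G}}(\mathcal{D})}$ then gives the claim.

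I expect the main obstacle to be Step 1, together with the precise interpretation of the quotient. The definition of $\overset{\mathcal{S}}{\sim}$ uses the full group $\mathcal{G}[\mathcal{D}]$, which includes boundary operators, so one must argue carefully that the relevant equivalences inside $\mathcal{P}[\mathcal{D}]$ come only from $\mathcal{G}_{\text{bulk}}$. This includes ruling out cancellation of outside links among products of boundary stars, for which I would reuse the extremal-vertex trick of Step 3.
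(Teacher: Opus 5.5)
Your strategy is the paper's. The paper also lets each bulk $g_{\bm n}$ act by right multiplication, checks that this preserves $\mathcal{P}_{\mathcal{G},\text{odd}}$ and is a free $\mathbb{Z}_2$-action, and divides by $2$ per bulk operator ``provided that these Gauss's law operators are independent''. Your Steps 2 and 3 are sharper than the paper's text. The paper justifies parity preservation by noting that a closed loop has an even number of links. What is actually needed is your observation that $|\text{supp}(g)\cap\mathcal{D}_{\bar X}|$ is even, which also makes the phase real. Your extremal-vertex argument supplies the independence that the paper only assumes.

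The weak point is Step 1, which the paper never addresses; it tacitly uses only bulk operators. Your sketch does not go through: the extremal vertex of $V$ only guarantees that \emph{some} $X$ survives, and that link may lie inside $\mathcal{D}$. In fact the claim is false for some supports.

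Take the $3\times3$ block of plaquettes $W_{\bm n}$, $0\le n_x,n_y\le 2$, and delete $W_{(1,1)}$ and $W_{(1,2)}$. The sites $a=(1,2)$ and $b=(2,2)$ are non-bulk. However, the only link at either of them that is missing from $\mathcal{D}$ is the link joining them. So $g_ag_b$ is an $X$-string supported in $\mathcal{D}$, and it is not generated by the two bulk operators $g_{(1,1)},g_{(2,1)}$. Every bounded face of $\mathcal{D}$ is still a unit plaquette, because the notch opens to the outside through the link from $(1,3)$ to $(2,3)$. Hence the previous count $2^{21}(2^7-1)$ remains valid. Yet the equivalence classes have $2^3$ elements rather than $2^{n_{\mathcal{G}}(\mathcal{D})}=2^2$.

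So the formula, and your Step 1 with it, needs a hypothesis on $\mathcal{D}$. This is in addition to the table's convention that $\mathcal{D}$ avoids the edge of the lattice, where a two- or three-body $g_{\bm n}$ can lie entirely inside $\mathcal{D}$ without being ``bulk''.

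A convenient sufficient hypothesis is that every non-bulk site $s$ of $\mathcal{D}$ has a link outside $\mathcal{D}$ whose other endpoint is not a site of $\mathcal{D}$. That link occurs in exactly one generator of $\mathcal{G}[\mathcal{D}]$, namely $g_s$. Its $X$ therefore survives in every product containing $g_s$, which gives Step 1 in one line with no extremal argument. The hypothesis holds for all supports in Tab.~\ref{tab:reduced-pauli-pool-z2lgt} (rectangles and the L-shape), and under it your proof is complete.
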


\begin{proof}
Suppose $g_n$ is a bulk Gauss's law operator in the support of a Pauli pool $\mathcal{P}$. Then the right action of $g_n$ on $\mathcal{P}$ is given by the right multiplication of $X$ on each Pauli matrix
\begin{align}
    \begin{cases}
        IX = X\\
        XX = I\\
        YX =-iZ\\
        ZX =iY\\
    \end{cases}\,.
    \label{eq:x-action}
\end{align}
This means that the action of $g_n$ does not mix $\bm{\sigma}_X$ and $\bm{\sigma}_{\bar X}$, $\bm{\sigma}=\bm{\sigma}_X\otimes{\bm{\sigma}_{\bar X}}\to\pm \bm{\sigma}'_X\otimes{\bm{\sigma}'_{\bar X}}$ with $\text{supp}(\bm{\sigma}_X)$ and $\text{supp}(\bm{\sigma}_{\bar{X}})$ unchanged.
Consider $\bm{\sigma}=\bm{\sigma}_X\otimes{\bm{\sigma}_{\bar X}}\in\mathcal{P}_{\mathcal{G},\text{odd}}$, which thus satisfies the conditions~\eqref{eq:pauli-g-odd-condition}.
Since all closed loops on the square lattice always have an even number of links, we have
\begin{align}
    \text{num}_Y(\sigma_{\bar X})\equiv \text{num}_Z(\sigma_{\bar X}) \mod 2.
\end{align}
Thus, the exchange of $Z$ and $Y$ induced by the action~\eqref{eq:x-action} does not change the parity condition~$\text{num}_Y(\sigma_{\bar X})$. Hence, one finds 
\begin{align}
    \bm{\sigma}\cdot g_n \in\mathcal{P}_{\mathcal{G},\text{odd}}.
\end{align}
Moreover, we also have
\begin{align}
    (\bm{\sigma}\cdot g_n)\cdot g_n =\bm{\sigma}\cdot g_n^2=\bm{\sigma}\,.
\end{align}

Therefore, the right action of $g_n$ induces a free $\mathbb{Z}_2$ action on $\mathcal{P}$. Consequently,
\begin{align}
    |\mathcal{P}_{\mathcal{G},\text{odd}}/\{g_n\}|=\frac{|\mathcal{P}_{\mathcal{G},\text{odd}}|}{2}\,,
\end{align}
for a fixed $g_n$.
Repeating the same argument for all bulk Gauss's laws $g_n\in\mathcal{G}$, we have
\begin{align}
|\mathcal{P}_{\mathcal{G},\text{odd}}/\mathcal{G}|=\frac{|\mathcal{P}_{\mathcal{G},\text{odd}}|}{2^{n_{\mathcal{G}}(\mathcal{D})}}\,,
\end{align}
provided that these Gauss's law operators are independent.
\end{proof}

\section{Additional numerical results}
\label{sec:additional-results}

We present additional numerical results for the larger coupling value $\lambda=3.5$, which is above the thermodynamic-limit critical coupling $\lambda_c=3.04438$~\cite{Blote:2002ieo}.
Figure~\ref{fig:tau-epsilon-large-lambda} shows the imaginary-time dependence of the relative error for $\lambda=3.5$, corresponding to Fig.~\ref{fig:tau-epsilon} in the main text for $\lambda=0.5$.
Figure~\ref{fig:dtau-epsilon-large-lambda} shows the dependence of the final relative error on the imaginary-time step $\Delta\tau$, corresponding to Fig.~\ref{fig:dtau-epsilon} in the main text for $\lambda=0.5$ and $2.0$.
The qualitative behavior is similar to that observed at~$\lambda=2.0$.

\begin{figure}[h]
\centering
\includegraphics[width=0.9\columnwidth]{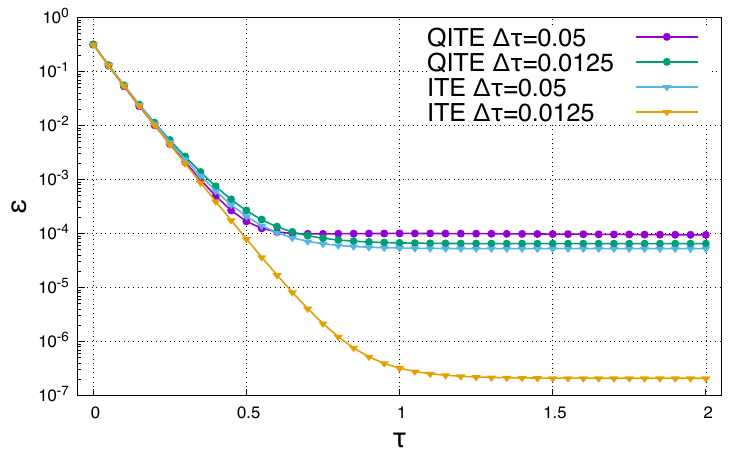}

\caption{Relative error of classically simulated QITE and ITE for $(N_{\text{site}}^x,N_{\text{site}}^y)=(3,3)$, and $\lambda=3.5$, with open boundary conditions. The initial state is the ground state of the electric term, and the Pauli pool is composed of the set of plaquettes. The plot shows relative error as a function of imaginary time $\tau$ with the coupling value~$\lambda=3.5$ fixed.}
\label{fig:tau-epsilon-large-lambda}
\end{figure}

\begin{figure}[h]
\centering
\includegraphics[width=0.9\columnwidth]{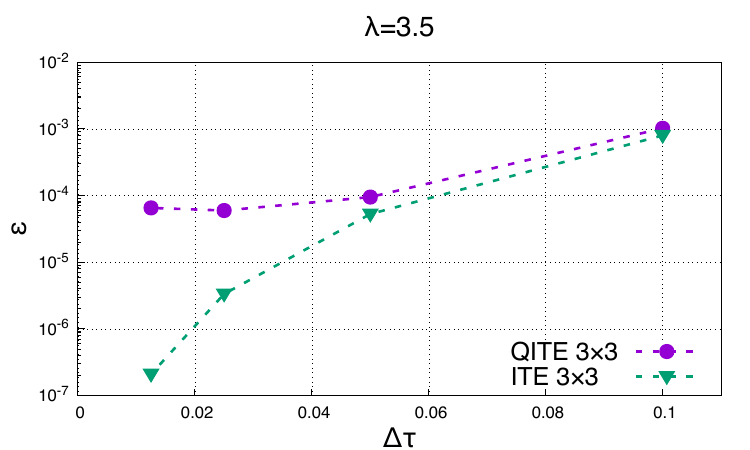}

\caption{Relative error of classically simulated QITE and ITE at the final imaginary time with respect to the DMRG energy as a function of imaginary time step. System size is $(N_{\text{site}}^x,N_{\text{site}}^y)=(3,3)$, and $\lambda =3.5$, with open boundary conditions. The initial state is the ground state of the electric term. The QITE and ITE series correspond to results obtained using QITE and ITE, respectively.}
\label{fig:dtau-epsilon-large-lambda}
\end{figure}

\clearpage

\bibliography{refs}
\end{document}